%% file: arxiv.tex
\documentclass[blackref,approvalform, 11pt]{article}
\usepackage{fullpage}
\usepackage[utf8]{inputenc}
\usepackage{mathpazo}
\usepackage{natbib}
\usepackage{booktabs} 

\usepackage{booktabs} 
\usepackage[linesnumbered,noend,ruled,noline]{algorithm2e} 

\SetAlFnt{\small}
\SetAlCapFnt{\small}
\SetAlCapNameFnt{\small}
\SetAlCapHSkip{0pt}
\IncMargin{-\parindent}
\SetArgSty{textnormal}
 
\SetCommentSty{mycommfont}

\input{packages}

\title{Forging Self-Funded Marketplaces among Strategic Agents}
\date{}
\author{
Yuan Deng\\
{Google Research}\\
\small\href{mailto:dengyuan@google.com}{dengyuan@google.com}
\and
Vasilis Gkatzelis\\
{Drexel University}\\
\small\href{mailto:gkatz@drexel.edu}{gkatz@drexel.edu}
\and
Xizhi Tan\\
{Stanford University}\\
\small\href{mailto:xizhi@stanford.edu}{xizhi@stanford.edu}
\and
Grigoris Velegkas\\
{Google Research}\\
\small\href{mailto:gvelegkas@google.com}{gvelegkas@google.com}
\and 
Song Zuo\\
{Google Research}\\
\small\href{mailto:szuo@google.com}{szuo@google.com}
}
\begin{document}
\maketitle

\begin{abstract}
We introduce the problem of designing mechanisms that incentivize strategic agents to form self-funded  marketplaces. In our model, if agent $i$ exerts effort $x_i\in [0,1]$, they incur a cost of $x_i\cdot c_i$ (where $c_i$ is unknown to the mechanism designer) and they generate revenue $x_i\cdot r_i$; crucially, $c_i$ can be greater or smaller than $r_i$. Each effort profile $\mathbf{x}$ yields value $v(\mathbf{x})$ and the objective is to choose an effort vector that maximizes the value while ensuring that every agent $i$ receives a payment $p_i\geq x_i\cdot c_i$ and that $\mathbf{x}$ is budget-balanced, i.e., $\sum_{i} p_i \leq \sum_{i} x_i\cdot r_i$. This problem generalizes the well-studied budget-feasible mechanism design problem, where the requirement is that  $\sum_{i} p_i \leq B$ for some predetermined budget $B$.

To evaluate the performance of such mechanisms, we first consider the first-best benchmark (the optimal value achievable in the absence of any private information) and show that no truthful auction can achieve a bounded approximation of this benchmark. Also, even in restricted settings, no auction can achieve better than a logarithmic approximation. We complement these results by proposing a class of sequential auctions whose subgame perfect equilibria guarantee a logarithmic approximation of this benchmark. We then introduce an alternative benchmark, the maximin share (MMS), that better captures the thickness of the market and we provide an auction whose subgame perfect equilibria achieve a constant approximation of this benchmark. 
\end{abstract}

\input{notation}

\input{intro}

\input{ourresult}

\input{relatedwork}

\input{prelim}

\input{approxopt}

\input{bafo_results}

\input{mms}

\input{conclusion}

\bibliographystyle{alpha}
\bibliography{refs}
\newpage
\appendix
\crefalias{section}{appendix}
\crefalias{subsection}{appendix}

\input{apx_bafo}

\input{apx_mms}

\end{document}

%% file: packages.tex
\usepackage{graphicx}
\usepackage{xcolor}
\definecolor{DrexelBlue}{HTML}{003058}
\definecolor{commentblue}{HTML}{2A6EC7}

\usepackage{hyperref}
\hypersetup{
  colorlinks = true,  
  urlcolor = {commentblue},
  linkcolor = {DrexelBlue}, 
  citecolor = {YellowOrange!85!black}
}
\usepackage{booktabs} 
\usepackage{multirow} 
\usepackage{multicol} 
\usepackage{natbib}
\usepackage{color-edits} 
\addauthor[Grigoris]{gv}{teal}
\addauthor[Yuan]{yd}{pink}
\addauthor[Xizhi]{xt}{gold}
\addauthor[Vasilis]{vg}{red}

\usepackage[utf8]{inputenc}

\usepackage{amsmath}
\usepackage{amsthm}
\usepackage{bbm}
\usepackage{blkarray}
\usepackage{color}
\usepackage{enumerate}
\usepackage{float} 
\usepackage{subcaption}
\usepackage{tikz}
\usetikzlibrary{arrows.meta, positioning, calc, shapes, fit}
\usepackage{mathtools}
\usepackage[inline]{enumitem} 
\usepackage{nicefrac}
\usepackage{thm-restate} 
\usepackage{centernot}

\usepackage[capitalise,noabbrev,nameinlink]{cleveref} 

\usepackage[dvipsnames]{xcolor}
\definecolor{niceRed}{RGB}{190,38,38}
\definecolor{blueGrotto}{HTML}{059DC0}
\definecolor{royalBlue}{HTML}{057DCD}
\definecolor{navyBlue}{HTML}{0B579C}
\definecolor{limeGreen}{HTML}{81B622}
\definecolor{nicePurple}{HTML}{9c27b0}
\definecolor{lightRoyalBlue}{HTML}{def2ff} 
 \definecolor{gold}{HTML}{ffa300}

\usepackage{soul} 

\newtheorem{theorem}{Theorem}[section]
\newtheorem{corollary}[theorem]{Corollary}

\newtheorem{lemma}[theorem]{Lemma}

 \newtheorem*{inftheorem*}{Informal Theorem}

\newtheorem{definition}[theorem]{Definition}

\newtheorem*{definition*}{Definition}

\theoremstyle{remark} 

\newtheorem{remark}{Remark}

\AfterEndEnvironment{definition}{\noindent\ignorespaces}
\AfterEndEnvironment{infdefinition}{\noindent\ignorespaces}
\AfterEndEnvironment{example}{\noindent\ignorespaces}
\AfterEndEnvironment{assumption}{\noindent\ignorespaces}
\AfterEndEnvironment{lemma}{\noindent\ignorespaces}
\AfterEndEnvironment{theorem}{\noindent\ignorespaces}
\AfterEndEnvironment{proposition}{\noindent\ignorespaces}
\AfterEndEnvironment{fact}{\noindent\ignorespaces}
\AfterEndEnvironment{question}{\noindent\ignorespaces}
\AfterEndEnvironment{corollary}{\noindent\ignorespaces}
\AfterEndEnvironment{model}{\noindent\ignorespaces}
\AfterEndEnvironment{remark}{\noindent\ignorespaces}
\AfterEndEnvironment{proof}{\noindent\ignorespaces}
\AfterEndEnvironment{fact}{\noindent\ignorespaces}
\AfterEndEnvironment{minftheorem}{\noindent\ignorespaces}
\AfterEndEnvironment{inftheorem}{\noindent\ignorespaces}
\AfterEndEnvironment{maintheorem}{\noindent\ignorespaces}
\AfterEndEnvironment{restatable}{\noindent\ignorespaces}
\AfterEndEnvironment{observation}{\noindent\ignorespaces}

\crefname{section}{Section}{Sections}
\crefname{theorem}{Theorem}{Theorems}
\crefname{theorem*}{Theorem}{Theorems}
\crefname{inftheorem}{Informal Theorem}{Informal Theorems}
\crefname{assumption}{Assumption}{Assumptions}
\crefname{lemma}{Lemma}{Lemmas}
\crefname{definition}{Definition}{Definitions}
\crefname{infdefinition}{Informal Definition}{Informal Definitions}
\crefname{conjecture}{Conjecture}{Conjectures}
\crefname{corollary}{Corollary}{Corollaries}
\crefname{construction}{Construction}{Constructions}
\crefname{conjecture}{Conjecture}{Conjectures}
\crefname{claim}{Claim}{Claims}
\crefname{observation}{Observation}{Observations}
\crefname{proposition}{Proposition}{Propositions}
\crefname{fact}{Fact}{Facts}
\crefname{question}{Question}{Questions}
\crefname{problem}{Problem}{Problems}
\crefname{remark}{Remark}{Remarks}
\crefname{example}{Example}{Examples}
\crefname{equation}{Equation}{Equations}
\crefname{appendix}{Appendix}{Appendices}
\crefname{model}{Model}{Models}
\crefname{figure}{Figure}{Figures}
\crefname{condition}{Condition}{Conditions}
\crefname{algorithm}{Mechanism}{Mechanisms}
\Crefname{algorithm}{Mechanism}{Mechanisms}

\usepackage{etoolbox}

\newcommand{\eat}[1]{}

\makeatletter
\newcommand{\tagnum}[2]{%
    \refstepcounter{equation}%
    \tag{#1) \ (\theequation}%
    \protected@write \@auxout {}{%
        \string \newlabel {#2}{{\theequation}{\thepage}{}{equation.\theequation}{}}%
    }%
}
\makeatother

\newcommand{\R}{\mathbb{R}}

\newcommand{\E}{\operatornamewithlimits{\mathbb{E}}}

\newcommand{\OPT}{\mathrm{OPT}}

\renewcommand{\epsilon}{\varepsilon}

\makeatletter
\newcommand*{\tran}{{\mathpalette\@tran{}}}
\newcommand*{\@tran}[2]{\raisebox{\depth}{$\m@th#1\intercal$}}
\makeatother

\renewcommand{\bar}{\overline}

\def\<{\langle}
\def\>{\rangle}

\DeclareMathAlphabet{\mathpzc}{OT1}{pzc}{m}{it}

\DeclareMathAlphabet{\mathdutchcal}{U}{dutchcal}{m}{n}
\SetMathAlphabet{\mathdutchcal}{bold}{U}{dutchcal}{b}{n}
\DeclareMathAlphabet{\mathdutchbcal}{U}{dutchcal}{b}{n}
 
\DeclareMathAlphabet\urwscr{U}{urwchancal}{b}{n}%
\DeclareMathAlphabet\rsfscr{U}{rsfso}{m}{n}
\DeclareMathAlphabet\euscr{U}{eus}{m}{n}
\DeclareFontEncoding{LS2}{}{}
\DeclareFontSubstitution{LS2}{stix}{m}{n}
\DeclareMathAlphabet\stixcal{LS2}{stixcal}{m} {n}

\newcommand{\opt}{\textrm{OPT}}

        \usepackage{tikz}
\usetikzlibrary{patterns}

\renewcommand{\hat}{\widehat}

\usepackage{array}
\newcolumntype{L}[1]{>{\raggedright\let\newline\\\arraybackslash\hspace{0pt}}m{#1}}
\newcolumntype{C}[1]{>{\centering\let\newline\\\arraybackslash\hspace{0pt}}m{#1}}
\newcolumntype{R}[1]{>{\raggedleft\let\newline\\\arraybackslash\hspace{0pt}}m{#1}}

\usepackage{newpxtext}
\usepackage{newpxmath}

%% file: notation.tex
\newcommand{\alloc}{\mathbf{x}}
\newcommand{\pay}{\mathbf{p}}
\newcommand{\bb}{budget-balance}
\newcommand{\wel}{\mathcal{W}}
\newcommand{\cost}{\mathbf{c}}
\newcommand{\revenue}{\mathbf{r}}
\newcommand{\mms}{\text{MMS}}
\newcommand{\sumval}{\mathcal{V}}
\newcommand{\ladderbafo}{\textsc{BAFOLadder}}
\newcommand{\bids}{\mathbf{b}}
\newcommand{\guessbafo}{\textsc{Guess-and-BAFO}}
\newcommand{\fb}{\text{FB}}
\newcommand{\pred}{\hat{\opt}}

%% file: intro.tex
\section{Introduction}

Consider a set $N$ of content creators and a function $v(\cdot)$ that quantifies the collective value that these agents can generate if each agent $i$ exerts some effort $x_i\in [0,1]$. Apart from value, each content creator $i$ also generates some revenue $x_i\cdot r_i$ through their content, but they suffer a cost $x_i \cdot c_i$ for their effort, where $c_i$ is private. We approach this setting from the perspective of a principal, such as a platform owner, wants to choose an effort profile $\mathbf{x}$ that maximizes $v(\mathbf{x})$. However, the principal faces two important obstacles. The first is that the content creators are \emph{strategic} agents and their costs are \emph{private}. Therefore, to appropriately incentivize them, the principal must design an auction-based mechanism that determines the effort vector $\mathbf{x}$ and the payment $p_i \geq x_i\cdot c_i$ for each $i\in N$. The second obstacle is that $\mathbf{x}$ needs to be \emph{self-funded}: the total payment required to compensate the content creators must be covered by the aggregate revenue they generate, i.e., $\sum_{i\in N} p_i \leq \sum_{i\in N} x_i \cdot r_i$.

This \emph{autarkic mechanism design} problem poses a formidable challenge. For example, it generalizes the well-studied problem of budget-feasible mechanism design introduced by Singer~\cite{singer2010budget}, where a principal seeks to maximize $v(\mathbf{x})$ subject to an exogenous budget constraint $\sum_{i\in N}p_i \leq B$. The inherent difficulty in the budget-feasible setting stems from the fact that  the feasibility constraint is \emph{private}, so the principal has no estimate regarding what value they should be aiming for: e.g., if $\sum_{i\in N} c_i \leq B$, it would even be feasible to compensate every $i\in N$ for maximum effort, but the principal does not know the agents' costs. Since the agents are strategic and their goal is to maximize their payment, eliciting their costs and satisfying the budget constraint becomes more demanding. This difficulty is compounded in our setting: instead of an exogenously defined budget, the principal needs to satisfy an endogenous budget constraint determined by the choice of $\mathbf{x}$ itself. 

The fact that the budget is endogenous in the autarkic mechanism design problem introduces \emph{complementarities} across the agents, and it gives rise to a feasibility constraint that \emph{may not even be downward-closed}. For example, consider an instance of two content creators with costs $c_1 = 1,~ c_2=1$, and revenues $r_1=2,~ r_2=0$. Note that hiring both of them is feasible, as it would generate a total revenue of $2$, which is sufficient to cover their costs. However, hiring the second agent alone is infeasible. This is in contrast to budget-feasible mechanism design where the budget is exogenous, leading to a downward-closed feasibility constraint. To demonstrate how this complementarity complicates the problem, assume that the second agent can generate much more value than the first one. To maximize the value, the principal would like to hire both of them, but why would agent $1$ be willing to subsidize the cost of agent $2$? Since the costs are private, agent $1$ could pretend that their cost is $2$ and the principal would still prefer to hire them rather than hiring no one.

In light of these obstacles, what would be an appropriate benchmark for evaluating the performance of autarkic mechanisms? The most ambitious one is the first-best benchmark, i.e., the optimal value achievable if the agents' costs were public: $\max_{\mathbf{x}} v(\mathbf{x})$ subject to $\sum_{i\in N} x_i \cdot c_i\leq \sum_{i\in N} x_i\cdot r_i$. However, this may be too demanding, especially in instances like the one above, where maximizing the value heavily depends on just one agent. Our goal in this paper is to study this natural problem and to evaluate the extent to which incentive compatible mechanisms can approximate appropriate value benchmarks.

\begin{center}
\emph{What are appropriate value benchmarks for evaluating autarkic mechanisms, and to what\\ extent can we approximate these benchmarks despite the strategic behavior of the agents?}
\end{center}

%% file: ourresult.tex
\subsection{Our Results} 

We introduce the autarkic mechanism design problem, which is interesting from both a practical and a theoretical perspective. On one hand, it captures a variety of modern marketplaces that focus on self-sufficiency (see \cref{sec:relatedWork} for a discussion regarding applications). On the other hand, it poses new technical challenges due to its complicated feasibility constraint that is simultaneously \emph{private} and \emph{not downward-closed}. We focus on instances where $v(\mathbf{x})$, where $v: [0, 1]^n \to \mathbb{R}_{\ge 0}$ is an arbitrary non-decreasing subadditive function. Note that although we assume that each agent can exert any amount of effort $x_i\in [0,1]$ almost all of our results also hold for the ``integral'' case where $x_i\in \{0, 1\}$ (see \cref{sec:conclusion} for a more detailed discussion).

\paragraph{Inapproximability of the First-Best Benchmark  (\cref{sec:opt-lower-bounds}).}
To better understand the hardness of this problem, we first focus on the extent to which incentive-compatible mechanisms can approximate the first-best benchmark. For dominant-strategy incentive-compatible (i.e., truthful) mechanisms, we show that no bounded approximation is achievable, even if the mechanism designer knows the optimal value, \opt, of the first-best benchmark in the instance at hand.

\begin{inftheorem*}
    No deterministic or randomized truthful-in-expectation mechanism can achieve a bounded approximation to the first-best benchmark, even if the mechanism knows the optimal value $\opt$.
\end{inftheorem*}

At the core of this impossibility result is a class of two-agent instances similar to the one discussed in the introduction. Notably, it shows that even if we know exactly what the value of $\opt$ is, the fact that the costs of the two agents are private makes it impossible for a mechanism to elicit the true costs using payments that do not exceed the ``budget'' that the revenue generates. 


Motivated by this observation, we further restrict these instances and consider the case where the cost of one of the two agents is public. If $\opt$ is known in advance, this reduction to ``single-agent'' instances makes the problem easy. However, we show that if the mechanism is instead provided with a (normalized) range $[1,\mathcal{V}]$ within which $\opt$ lies, and we evaluate the achievable approximation as a function of $\mathcal{V}$, the performance of any mechanism grows logarithmically in $\mathcal{V}$.

\begin{inftheorem*}
For every deterministic or randomized mechanism and any $n\geq 2$,
there exists an instance with a single strategic agent (the costs of all other agents are public) such that, even if the mechanism knows that $\opt$ lies in some range $[1,\mathcal{V}]$, its approximation of $\opt$ is $\Omega(\log \mathcal{V})$ in any equilibrium.
\end{inftheorem*}

\paragraph{Subgame Perfect Equilibria and Best and Final Offers  (\cref{sec:bafo}).}
Complementing these negative results, we then propose a sequential auction which, given a range $[1,\sumval]$ for \opt, achieves an optimal approximation of $O(\log\sumval)$ in any subgame perfect equilibrium. This auction randomly chooses a target value $\overline{v}$ within $[1,\sumval]$ and it commits to only returing a feasible effort profile $\mathbf{x}$ if the value it generates satisfies that threshold, i.e., $v(\mathbf{x})\geq \overline{v}$. Then, given that commitment, it uses the \emph{best and final offer} (BAFO) process, analyzed by Gkatzelis et al.~\cite{GML25}: it sequentially approaches the agents, asking them to report the payment they request, given the payments requested by all previous agents.\footnote{Note that this can also be implemented as a descending auction with personalized prices $p_i$ for each agent $i$ that drop over time. Before each price drop of $p_i$, agent $i$ can either accept the drop or permanently ``freeze'' at the current price (see~\cite{GML25}).}
After collecting all the bids $(b_i)_{i\in N}$, the principal selects some $\mathbf{x}$ with $v(\mathbf{x})\geq \overline{v}$ that is budget-feasible with respect to their bids, i.e., $\sum_{i\in N}x_i\cdot b_i\leq \sum_{i\in N} x_i\cdot r_i$, and compensates each agent $i\in N$ with a payment $x_i\cdot b_i$, i.e., proportional to their bid and effort.

\begin{inftheorem*}
    The {\guessbafo} auction, provided with a range $[1,\sumval]$ for $\OPT$, guarantees an $O(\log \sumval)$ approximation to the first-best benchmark in every subgame perfect equilibrium.
\end{inftheorem*}

The success of this auction is based on the observation that if the value target that the principal ``guessed'' is feasible, then the commitment to only hire if this target is met introduces competition among the agents and forces them to report more ``reasonable'' prices to avoid being excluded. In fact, this is just one instance of a broader ``wish-come-true'' approach, which may be of independent interest: if the principal defines a selection rule based on \emph{any} price-monotonic condition (value, cardinality, diversity constraints, etc.) that is feasible under the agents' true costs, then the sequential BAFO process ensures that the agents will compete and coordinate to satisfy this condition in every subgame perfect equilibrium. 


\paragraph{The Maximin Share (MMS) benchmark  (\cref{sec:mms}).}
Our results on the first-best benchmark demonstrate that it does not capture an important parameter of the instance at hand: the amount of competition. Two instances with the same \opt\ value can be very different in terms of their ``strategic'' complexity, depending on the amount of competition. For example, an instance where a single agent brings all the revenue may have the same optimal value as one with multiple such agents, even though the former can be much more demanding due to the lack of competition. 
To address this issue, we introduce the Maximin Share (MMS) benchmark,\footnote{The name of this benchmark is a tribute to the closely related MMS fairness notion introduced by Budish~\cite{Budish11} to address the fact that classic fairness notions like proportionality and envy-freeness may not be achievable when allocating indivisible goods.} which better captures the extent to which a few agents may have a lot of control over the total value.
To define this benchmark, consider any partition {of} the set $N$ of agents into two disjoint groups $P_1$ and $P_2$ and let $\mathcal{W}(P_1)$ and $\mathcal{W}(P_2)$ denote the optimal budget-balanced values achievable using only agents in $P_1$ and $P_2$, respectively. Then, the MMS benchmark is equal to $\max_{P_1,P_2}\min\{\mathcal{W}(P_1), \mathcal{W}(P_2)\}$. In other words, for each partition of $N$ into $P_1$ and $P_2$, it focuses on the optimal value of the second-best group, and then chooses the partition that maximizes this quantity. Note that for instances where the contribution to the optimal value and revenue is not dominated by a few agents, the MMS benchmark closely approximates $\opt$. For example, if every agent's marginal contribution to the value and revenue is negligible, the MMS benchmark converges to half of \opt\ because there exists a partition such that $\mathcal{W}(P_1)\approx \mathcal{W}(P_2) \approx \opt/2$. Therefore, this provides a natural second-best benchmark for autarkic mechanism design.  

We first find that truthfulness remains a significant hurdle even under this benchmark. 
\begin{inftheorem*}
No deterministic or randomized truthful-in-expectation mechanism can achieve an approximation ratio better than $\Omega(\log n)$ against the $\mms$ benchmark.
\end{inftheorem*}

However, our main result in this case is a positive one, even without any estimates regarding the optimal value. We propose the \ladderbafo, a sequential auction that achieves a constant approximation to the MMS benchmark in every subgame perfect equilibrium. 

\begin{inftheorem*} The \ladderbafo\ mechanism achieves a constant-factor approximation to the MMS benchmark in every subgame perfect equilibrium. \end{inftheorem*}

Rather than using an estimate of \opt\ to introduce competition among the agents, the \ladderbafo\ mechanism randomly splits them into two groups and lets the two groups compete with each other. Specifically, it uses each group to determine a value target for the other one, and alternates between them until one of the two groups is unable to meet the latest target.

%% file: relatedwork.tex
\subsection{Related Work}\label{sec:relatedWork}

\paragraph{Self-Funded Markets.}
Our model is motivated by platforms that rely on third-party content creators or data owners---the \emph{agents} in our model---where the payments to these agents are funded directly by the revenue their content generates within the platform. We capture these interactions by letting \(x_i \in [0,1]\) denote agent \(i\)'s effort, content, or data, which is procured by the platform.  
Procuring this content imposes a private cost \(x_i \cdot c_i\) on the agent.
Crucially, these settings highlight that an agent's contribution has two distinct dimensions that need not be aligned: \emph{value} and \emph{revenue}. The platform's overall value, \(v(\mathbf{x})\), captures broad objectives such as output quality, coverage, diversity, or long-term user trust. Conversely, revenue \(x_i \cdot r_i\) is the tangible income associated with the agent's content, such as advertising, subscriptions, or user engagement.

In AI search systems that utilize retrieval-augmented generation (RAG), when a user poses a query, the platform synthesizes an answer by drawing on multiple external web sources (the agents). The platform derives \emph{value} from these sources because they improve the accuracy and quality of the generated answer. However, the \emph{revenue} (e.g., from ads displayed alongside the search result \cite{google_ai_features,openai_ads}) might not perfectly align with the value provided. For instance, a highly specialized medical blog might contribute crucial \emph{value} to a complex query but generate little direct ad revenue. Emerging initiatives, such as Index by Parallel \cite{parallel_index}, seek to explicitly compensate these content owners based on their contribution to the AI-agent outputs.

Similar dynamics arise on creator platforms \cite{tiktok_rewards,youtube_partner,spotify_royalties,medium_partner}. Agents (e.g., video creators, musicians, or writers) incur production costs to make content. Broadly popular, commercially driven content may generate high immediate \emph{revenue} via streams and ads, while specialized educational content might yield low revenue but significantly boost the platform's catalog diversity and reputation (\emph{value}).

In all these examples, selecting different agents changes both the outcome produced and the funds available for payments. This introduces our core technical challenge, an \emph{endogenous budget}: unlike standard procurement, which starts with a fixed budget \(B\) independent of the selected allocation \cite{singer2010budget}, our model requires that total payments must be covered strictly by the revenue generated by that specific allocation, \(\sum_i r_i x_i\).

\paragraph{The Best and Final Offer (BAFO) Protocol.}
In the sequential mechanisms that we propose, each bidder reports their best-and-final-offer, which is a standard practice in government and high-value commercial contracting~\cite{uslegal_bafo}. These auctions can also be implemented as descending auctions where each bidder can choose to ``freeze'' their price, signaling that this is their best-and-final-offer. While auctions using BAFO strategies are widely adopted in practice, not much was known about the theoretical guarantees they can provide. Gkatzelis et al.~\cite{GML25} initiated the theoretical analysis of procurement auctions with BAFO strategies, focusing on social welfare maximization,  and they proposed mechanisms using BAFO that achieve the optimal welfare in every subgame perfect equilibrium. 
We use the same sequential commitment
device for a different purpose: to coordinate the agents on an allocation that
satisfies an endogenous, coalition-dependent budget constraint while attaining
a target value.

\paragraph{Budget-Feasible Mechanism Design.}
A closely related problem is budget-feasible mechanism design, introduced by Singer~\cite{singer2010budget}, in which a principal maximizes value subject to
an exogenous hard budget and sellers have private costs. For additive
valuations, Chen et al.~\cite{chen2011approximability} obtained deterministic
and randomized approximation factors of $2+\sqrt{2}$ and $3$, respectively. Gravin et al.~\cite{gravin2019optimal} subsequently obtained tight factors of
$3$ for deterministic mechanisms and $2$ for randomized mechanisms.

For monotone submodular valuations, Chen et al.~\cite{chen2011approximability} gave approximation factors of $8.34$
deterministically and $7.91$ in expectation. Jalaly and Tardos~\cite{JT21} obtained a polynomial-time randomized factor of $5$;
with unrestricted computation, they also obtained factors of $4$ for
randomized mechanisms and approximately $4.56$ for deterministic mechanisms.
Among computationally efficient mechanisms, Balkanski et al.~\cite{balkanski2022deterministic} gave the first polynomial-time
deterministic constant-factor mechanism, a descending-clock auction with
factor $4.75$. The TripleEagle clock framework of Han et al.~\cite{han2023triple} achieved factors of approximately $4.3$ for
randomized mechanisms and $4.45$ for deterministic mechanisms using a linear
number of value queries. A subsequent randomized mechanism improved the former
factor to $4.08$ \cite{han2025efficient}. Most recently,  Han and Lu~\cite{hanlu2026improved} give linear-time
deterministic and randomized mechanisms with approximation factor $3.798$.

Constant-factor mechanisms are also known for non-monotone submodular
valuations \cite{amanatidis2019budget}. For XOS and subadditive valuations,
stronger demand-oracle models play a central role
\cite{dobzinski2011mechanisms,bei2017worst}. Neogi et al.~\cite{neogi2024budget} obtained a polynomial-time constant-factor mechanism
for XOS valuations with demand-oracle access, as well as an explicit
constant-factor mechanism for general subadditive valuations without a
polynomial-time guarantee. The polynomial-time demand-oracle approximation for
general subadditive valuations was subsequently improved to
$O(\log\log n)$ \cite{neogi2025subadditive}.

Departing from truthful mechanism design, Bhawalkar et al.~\cite{BDLMP25} analyze a simultaneous first-price, or pay-as-bid,
budget-feasible procurement marketplace. Under a small-seller condition, their
equilibrium-efficiency guarantee converges to a factor of $2$ in the
large-market limit.

More broadly, many foundational budget-feasible mechanisms use direct
sealed-bid formats
\cite{singer2010budget,chen2011approximability,JT21}, whereas a parallel line
studies dynamic, descending-clock, and deferred-acceptance implementations.
Early examples considered unit valuations or knapsack-type procurement
\cite{ensthaler2014dynamic,JM17}; more recent mechanisms cover monotone
submodular objectives
\cite{balkanski2022deterministic,huang2023randomized,han2023triple,
han2025efficient,hanlu2026improved}. Another line obtains stronger guarantees
under large-market assumptions, in which no individual seller is pivotal
relative to the available budget or total value
\cite{anari2014mechanism,JT21,BDLMP25}. The central distinction from all of
this work is that their budget is exogenous. In our model, the available
budget depends on the revenue generated by the selected agents, so the
resulting feasibility constraint need not be downward-closed.

\paragraph{Budget-Balanced Mechanisms.}
In a cost-sharing problem, a mechanism selects a set of agents to receive a
jointly supplied service and charges those agents so that the collected
payments cover, exactly or approximately, the cost of serving them.
The autarkic market problem could be thought of as a ``dual problem'' to the budget-balanced cost-sharing problem. Despite the difference in the direction of transfers, both settings are governed by the Green–Kohlberg–Laffont trilemma \cite{GKL76}, which establishes the fundamental incompatibility of economic efficiency, budget balance (BB), and strategyproofness. For any mechanism satisfying both BB and incentive compatibility, the multiplicative approximation of welfare can be unbounded \cite{feigenbaum2003hardness}. Consequently, much of the literature focuses on Moulin mechanisms \cite{moulin1999incremental, moulin2001strategyproof}, which leverage cross-monotonic cost-sharing rules to guarantee group-strategyproofness and coalitional stability. 
The worst-case efficiency loss of such mechanisms is strictly determined by the combinatorial ``summability'' of the cost function \cite{roughgarden2009quantifying}, though welfare can be improved by relaxing exact BB to allow for small deficits or almost-balanced outcomes \cite{moulin2001strategyproof}.

\paragraph{General Procurement.} The study of frugal mechanism design is motivated by the challenge of minimizing unnecessary overpayment while preserving truthfulness in procurement auctions and combinatorial team selection problems. This perspective has been applied across a variety of classic optimization settings. In network design, Archer and Tardos~\cite{archer2002frugal} and Talwar~\cite{talwar2003price} initiated the study of frugality in path auctions, demonstrating that VCG mechanisms can incur large overpayments and establishing tight frugality bounds under the frugal solution benchmark. For coverage problems such as vertex cover and set cover, Elkind et al.~\cite{elkind2007frugality} and subsequent works developed truthful mechanisms with frugality ratios that depend on structural parameters like graph degree or spectral properties, and also provided matching lower bounds in several cases. Frugality in $k$-flow and cut problems has also been explored, with mechanisms achieving constant-competitive frugality ratios relative to various benchmarks, including both disjoint-alternative and equilibrium-based definitions~\cite{karlin2005beyond, kempe2010spectral, chen2010frugal}. In some settings, such as matroid and spanning tree auctions, VCG mechanisms are provably frugal, achieving optimal or near-optimal frugality ratios~\cite{karlin2005beyond,talwar2003price}. Recently, Balkanski et al.~\cite{balkanski2025procurement} considered the frugality ratio of VCG in the strategic facility location problem and proved a tight bound of 3.
Apart from cost minimization, recent work has considered maximizing different objective functions of the buyer, such as the ``gains from trade'' (GFT)—the difference between the buyer's value and the sellers' costs. Deng et al.~\cite{deng2025procurement} considered this problem in both online and offline settings, seeking to optimize the net surplus generated by the trade.

%% file: prelim.tex
\section{Preliminaries}\label{sec:model}

We consider settings with a principal who runs a platform and a set $N=\{1, \ldots, n\}$ of agents that can create content for this platform. If an agent $i\in N$ exerts effort $x_i\in [0, 1]$ to create content, they suffer a cost $x_i \cdot c_i$, where $c_i \in \mathbb{R}_{\ge 0}$ is a private parameter, and this content generates revenue $x_i \cdot r_i$, where $r_i\in \mathbb{R}_{\ge 0}$ is public. 
For convenience, we sometimes refer to an agent with $r_i \geq c_i$ as a
    \emph{funder} and to an agent with $r_i<c_i$ as a \emph{creator}; the label therefore depends on whether the revenue the agent can generate is at least their cost. 
    Note that since the cost of an agent is private information, the principal cannot distinguish between funders and creators.
Given an effort vector (or allocation vector) $\mathbf{x}=(x_i)_{i\in N}$ by the agents, the value of the content they create is denoted as $v(\mathbf{x})$, where $v: [0, 1]^n \to \mathbb{R}_{\ge 0}$ is an arbitrary non-decreasing subadditive function, i.e.,
for every $\mathbf{x},\mathbf{y}\in[0,1]^n$ such that
$\mathbf{x}+\mathbf{y}\leq\mathbf 1$ coordinatewise,
$v(\mathbf{x}+\mathbf{y}) \leq v(\mathbf{x})+v(\mathbf{y})$.
Canonical examples are linear valuation functions, $v(\mathbf{x}) = \sum_{i \in N} x_i \cdot v_i$, where $v_i$ denotes the marginal value contribution of agent $i$.

The goal is to design a mechanism that interacts with the agents to choose an allocation vector $\mathbf{x} \in [0,1]^n$ that determines the effort $x_i$ that each agent $i$ should exert, and a payment vector $\mathbf{p} \in \mathbb{R}^n$ that determines the payment $p_i$ that agent $i$ will receive in return. The objective of this (potentially randomized) mechanism is to maximize the expected value $\mathbb{E}[v(\mathbf{x})]$ of the allocation while satisfying (ex-post) \bb\ (BB) and
(ex-post) individual rationality (IR), i.e.,

\[\sum_{i \in N} p_i \le \sum_{i \in N} x_i \cdot r_i,\tag{BB}\]
\[p_i \geq c_i \cdot x_i \quad \text{for every agent } i \in N.\tag{IR}\]

Budget balance ensures that the total payment distributed to the agents does not exceed the revenue that they collectively generate, and individual rationality ensures that every agent's payment is at least the cost that the agent suffers.

\vspace{0.2cm}
\noindent\textbf{Types of Mechanisms.}
We consider two main types of mechanisms: simultaneous mechanisms in the form of direct revelation, and sequential mechanisms in the form of sequential reporting. In sealed-bid mechanisms (or direct-revelation mechanisms), each agent $i \in N$ is asked to report their private cost $c_i$ in the form of a bid $b_i \ge 0$. The mechanism $M = (\alloc, \pay)$ collects the reported bid profile $\mathbf{b} = (b_1, \ldots, b_n)$ as input and then returns the allocation vector $\alloc(\mathbf{b})$ and payment vector $\pay(\mathbf{b})$ as output. On the other hand, sequential protocols give rise to extensive-form games.

\vspace{0.2cm}
\noindent\textbf{Utility and Solution Concepts.}
Each agent $i\in N$ aims to maximize their utility $u_i(\mathbf{x}, \mathbf{p}) = p_i - c_i\cdot x_i$. We assume a standard tie-breaking rule such that if $p_i = c_i \cdot x_i$ (i.e., zero utility), agents prefer participation (a larger $x_i$).

We model the interaction between the principal and the agents as a Stackelberg game. The principal (leader) moves first by announcing and committing to a procurement mechanism. This mechanism induces a bidding game among the agents (followers), who choose strategies to maximize their own (expected) utility. We assume an information asymmetry between the principal and the agents: while the principal knows the revenue $r_i$ for each agent, they do not know the realized costs $\mathbf{c}$ of the agents. Consistent with the standard literature on the efficiency of auctions in equilibrium \cite{ckst16,roughgardenSyrgkanisTardos2017, GML25}, we model the interaction \emph{among} the agents as a game of complete information, where the cost vector $\mathbf{c}$ is common knowledge among the agents themselves.

\textbf{Simultaneous Games.} In one-shot bidding formats, each agent $i \in N$ submits a bid $b_i$ chosen from their strategy space $B_i = \R^+$. We denote the profile of all bids as $\mathbf{b} = (b_1, \ldots, b_n) \in \prod_{i \in N} B_i$.
A profile $\mathbf{b}$ is a Nash equilibrium (NE) if for every agent $i$, the bid $b_i$ maximizes their utility given the bids of others:
\[u_i(b_i, b_{-i}) \ge u_i(b'_i, b_{-i}) \quad \forall b'_i \in B_i.\]

In equilibrium, no agent can unilaterally change their bid to a different $b'_i$ to increase their payoff.
As a refinement of equilibrium in simultaneous direct-revelation mechanisms, a mechanism $M = (\alloc, \pay)$ is \emph{dominant-strategy incentive-compatible} (DSIC) if telling the truth is a dominant strategy. Formally, for every agent $i$ with true cost $c_i$, every possible deviation $b_i$, and any bidding profile $\mathbf{b}_{-i}$ of the other agents:
\[p_i(c_i, \mathbf{b}_{-i}) - c_i \cdot x_i(c_i, \mathbf{b}_{-i}) \geq  p_i(b_i, \mathbf{b}_{-i}) - c_i \cdot x_i(b_i, \mathbf{b}_{-i}).  \tag{DSIC}\]
A randomized mechanism is \emph{truthful-in-expectation} if the inequality above holds in expectation of the internal randomness of the mechanism:
\[\mathbb{E}[p_i(c_i, \mathbf{b}_{-i}) - c_i \cdot x_i(c_i, \mathbf{b}_{-i})] \geq \mathbb{E}[p_i(b_i, \mathbf{b}_{-i}) - c_i \cdot x_i(b_i, \mathbf{b}_{-i})].\]
\textbf{Sequential Games and Subgame Perfect Equilibria.} 
Consider an extensive-form game with a set of players $N$, represented by a rooted tree $\mathcal{T}$ of finite depth with node set $\mathcal{S}$. For any node $s \in \mathcal{S}$, we denote the set of its children as $\mathcal{T}(s)$.
\begin{itemize}
\item \textbf{Nodes:} If $\mathcal{T}(s) = \emptyset$, $s$ is a terminal node ($s \in \mathcal{S}_\text{term}$); otherwise, it is an internal node ($s \in \mathcal{S}_\text{int}$).
\item \textbf{Strategies:} Each internal node $s$ is associated with a player $i(s)$ whose turn it is to move. Let $\mathcal{S}_{i} \subseteq \mathcal{S}_\text{int}$ denote the set of nodes associated with agent $i$. A pure strategy for player $i$ is a mapping $a_i: \mathcal{S}_i \to \mathcal{S}$ such that $a_i(s) \in \mathcal{T}(s)$ (i.e., selecting a child node).
\item \textbf{Payoffs and Continuation Values:} Each terminal node $z \in \mathcal{S}_\text{term}$ is associated with a payoff $\pi_j(z) \in \mathbb{R}$ for each player $j$. Given a strategy profile $a = (a_1, \dots, a_n)$, we define the continuation payoff $\hat{\pi}_j(s \vert{} a)$ recursively:

$$\hat{\pi}_j(s \vert{} a) = \begin{cases} \pi_j(s) & \text{if } s \in \mathcal{S}_\text{term} \\ \hat{\pi}_j(a_{i(s)}(s) \mid a) & \text{if } s \in \mathcal{S}_\text{int} \end{cases}$$

\item \textbf{Subgame Perfect Equilibrium (SPE):} A strategy profile $a$ is an SPE if for every player $i$, at every node $s \in \mathcal{S}_i$, the strategy $a_i(s)$ maximizes the agent's continuation payoff. That is, for all possible alternative moves $s' \in \mathcal{T}(s)$:

$$\hat{\pi}_i(a_i(s) \mid a) \ge \hat{\pi}_i(s' \mid a).$$

\end{itemize}
Note that every SPE is a Nash equilibrium of the entire game.\footnote{Any extensive-form game can be mapped to a simultaneous normal-form game where strategies correspond to comprehensive plans of action for every possible scenario. In this mapping, an SPE strategy profile guarantees that no player can strictly benefit from unilaterally deviating to a different complete plan, thereby satisfying the definition of a Nash equilibrium.}

\vspace{0.2cm}
\noindent\textbf{Benchmarks.}
Given an instance $\mathcal{I} = (v, \revenue, \cost)$ and a set of agents $S \subseteq N$, let $\wel(S,\mathcal{I})$ denote the optimal feasible value {restricted to agents in $S$}, i.e., the maximum value achievable by a budget-feasible coalition formed entirely within $S$ under the valuation function $v(\cdot)$:

$$\mathcal{W}(S , \mathcal{I}) = \max_{\mathbf{x} \in [0,1]^{\vert{}S\vert{}}} \left\{ v(\mathbf{x}) \quad \bigg\vert{} \quad \sum_{i \in S} r_i \cdot x_i \ge \sum_{i \in S} c_i \cdot x_i \right\}.$$

We define the first-best benchmark to be $\opt(\mathcal{I}) = \wel(N,\mathcal{I})$, the maximum value achievable under known costs subject to the budget-balance constraint. Note that the budget-balance constraint of the benchmark is with respect to the \emph{costs}, rather than the payment.

%% file: approxopt.tex
\section{Inapproximability of the Optimal Value Benchmark}\label{sec:opt-lower-bounds}
In this section, we explore the landscape of approximating the optimal value ($\opt$) in autarkic marketplaces.
\cref{sec:dsic} focuses on the limitations of 
DSIC direct-revelation mechanisms, and \cref{sec:lowerbound} establishes an inapproximability  result when almost all information is public and there is a single strategic agent with a private cost.

\subsection{Unbounded Approximation of Truthful Direct-Revelation Mechanisms}\label{sec:dsic}

We begin by demonstrating that truthfulness and efficiency are incompatible in this setting: any DSIC mechanism incurs an unbounded approximation ratio to the optimal value, \opt, even for instances with just \emph{two agents} and a \emph{linear} valuation function,
where \emph{$\opt$ is known in advance}. Specifically, the agents used in our construction can generate revenue $r_1 = 1$ and $r_2 = 0$, respectively, and 
the valuation function is $v(x_1,x_2) = x_2$, so the first agent generates all the revenue (the ``funder'') and the other agent provides all the value (the ``creator''). The following proof shows that even if we know that it is feasible to hire both agents, i.e., $c_1+c_2\leq 1$ and $\opt=1$, the fact that the exact values of $c_1$ and $c_2$ are private makes it impossible for a DSIC mechanism to identify appropriate prices: the ``information rent'' that DSIC mechanisms need to pay to the agents on top of their true costs makes it impossible to satisfy budget-balance, i.e., $p_1+p_2\leq 1$, even though $c_1+c_2\leq 1$.

\begin{theorem}[Unbounded Approximation of DSIC Mechanisms]\label{thm:dsic-hardness-approximation}
   Let $\Delta$ be a set of two-agent instances with a linear valuation function for which it is known that $\opt=1$.
   Any DSIC, IR, and BB direct-revelation mechanism $\mathcal{M}$ has unbounded worst-case approximation to $\opt$ 
   over instances from $\Delta.$
\end{theorem}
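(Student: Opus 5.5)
We work with the two-agent family described just before the statement: $r_1=1$, $r_2=0$, $v(x_1,x_2)=x_2$, and costs $(c_1,c_2)\in\Delta=\{c_1+c_2\le 1\}$ with $c_1,c_2\ge 0$. Every such instance has $\opt=1$, attained by $x=(1,1)$. Suppose $\mathcal{M}$ is DSIC, IR and BB, and has approximation ratio at most $\alpha$ on $\Delta$. Then on every truthful profile $(c_1,c_2)\in\Delta$ we have $x_2(c_1,c_2)\ge 1/\alpha$. The plan is to derive a contradiction with BB by letting the costs approach the boundary of the budget.

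\textbf{Step 1 (payment characterization).} For a single-parameter procurement setting with linear cost $c_i x_i$, DSIC means, by Myerson's lemma, that $x_i(\cdot,b_{-i})$ is non-increasing in agent $i$'s bid. It also fixes the payment as
\[
p_i(b_i,b_{-i})=b_i x_i(b_i,b_{-i})+\int_{b_i}^{\infty}x_i(t,b_{-i})\,dt + h_i(b_{-i}),
\]
and IR forces $h_i\ge 0$ when bids stay within the relevant range. We apply this formula to agent $2$, and also to agent $1$.

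\textbf{Step 2 (information rent of the creator).} Fix $c_1=\epsilon$ and let $c_2$ range over $[0,1-\epsilon]$; all of these profiles lie in $\Delta$. Agent $2$ must be allocated at least $1/\alpha$ at each such bid. Taking $c_2=0$, Step 1 gives
\[
p_2(\epsilon,0)\ge \int_0^{1-\epsilon}x_2(\epsilon,t)\,dt\ \ge\ (1-\epsilon)/\alpha .
\]

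\textbf{Step 3 (the funder's rent and BB).} BB at $(\epsilon,0)$ says $p_1+p_2\le x_1$. The funder's payment is at least $x_1\epsilon$ plus its own rent $\int_\epsilon^{1}x_1(t,0)\,dt$. To keep BB from failing, $x_1$ must be large at every funder bid up to $1$. In particular, take the profile $(c_1,c_2)=(1,0)\in\Delta$. There $p_1\ge x_1\cdot 1$ and $p_2\ge$ (rent of $2$ given $b_1=1$) $\ge \int_0^{0}\cdots$. This case is degenerate, so the argument instead has to combine the two rents.

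The cleanest route is a chaining argument. Along the diagonal $c_1+c_2=1-\delta$, DSIC for each agent forces a rent proportional to $\int x_i$ over that agent's feasible range, which is of constant length. BB at a well-chosen point forces (rent$_1$+rent$_2$) $\le x_1-c_1x_1-c_2x_2$. Concretely, at $(s,t)$ with $s+t$ close to $1$, the left side is $\gtrsim \frac{1}{\alpha}\bigl((1-s-t)+\dots\bigr)$, while the slack is $\le 1-s-t+\dots$. The approximation ratio therefore enters only through lower-order terms, so I would instead average over a scaled grid. I would show that BB at the points $(k\delta,\,1-k\delta-\delta)$ forces $x_2(k\delta,\cdot)$ to decay geometrically in $k$, giving $x_2\to 0$ somewhere in $\Delta$ and contradicting $x_2\ge 1/\alpha$ for every finite $\alpha$.

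\textbf{Main obstacle.} The hard part is Step 3: extracting a contradiction that scales with $\alpha$ rather than a constant-factor one. I would first try the two-dimensional integration over the triangle. Integrate the BB inequality over $\Delta$ and substitute the Myerson payment identities for both agents, which turn each rent into an integral over the triangle. The goal is to show that $\iint_\Delta x_2 \le$ something that forces $\inf x_2=0$. If this fails, I would fall back to an explicit recursive family of instances that pushes $x_2$ down by a constant factor at each step.
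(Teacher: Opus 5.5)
\textbf{There is a genuine gap: the proposal never reaches a contradiction.}

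Your Steps 1 and 2 are fine. The first sentences of Step 3 already contain the key inequality the paper uses: BB at $(c_1,0)$, the funder's rent $c_1x_1(c_1,0)+\int_{c_1}^1 x_1(z,0)\,dz$, and the creator's rent $(1-c_1)/\alpha$. The gap is that you abandon this after testing the single point $(1,0)$. You then substitute a diagonal/grid chaining argument and a two-dimensional integration over the triangle, and neither is carried out.

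Your central claim is that BB at the points $(k\delta,1-k\delta-\delta)$ forces $x_2$ to decay geometrically in $k$. It is asserted without derivation. Your own computation shows why the diagonal is the wrong place to look. There the creator's feasible range has length $O(\delta)$, so its rent is of the same order as the available slack, and BB at any single such point yields nothing. You do not explain how these local constraints would chain into geometric decay.

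\textbf{The missing idea.} Use the inequality at $(c_1,0)$ for \emph{all} $c_1\in[0,1)$ simultaneously, and exploit its self-referential structure. The funder's rent is made up of its own allocation at higher bids, which BB also forces to be large. Let $f(c)=x_1(c,0)$ and $F(c)=\int_c^1 f(z)\,dz$. BB together with the two Myerson bounds gives
\[
(1-c)\,f(c)\;\ge\;F(c)+\frac{1-c}{\alpha}\qquad\text{for every } c\in[0,1).
\]
This is equivalent to $\frac{d}{dc}\bigl[F(c)/(1-c)\bigr]\le-\frac{1}{\alpha(1-c)}$. Integrating from $0$ to $y$ and using $F(y)\ge 0$ yields $F(0)\ge\frac{1}{\alpha}\ln\frac{1}{1-y}\to\infty$, which contradicts $F(0)\le 1$. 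This is the paper's proof.

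There is an even shorter finish. Dividing by $1-c$ gives $f(c)\ge\frac{1}{\alpha}+\frac{1}{1-c}\int_c^1 f\ge\frac{1}{\alpha}+m$, where $m=\inf_{[0,1)}f$. Taking the infimum over $c$ then gives $m\ge m+\frac{1}{\alpha}$, which is impossible.

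You were right that the two rents must be combined. The combination has to happen along the line $c_2=0$, not along the diagonal. On that line the creator's rent $(1-c_1)/\alpha$ is a constant fraction of the funder's entire slack $(1-c_1)x_1$, rather than a lower-order term.
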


\begin{proof}
    We consider a setting with two agents, where
    agent 1 is the funder with $r_1=1$ and agent 2 is the creator with  $r_2=0$. The value function is linear, i.e., $v(x_1,x_2) = v_1 \cdot x_1 + v_2\cdot x_2,$ where $v_1=0, v_2 =1.$
    Let $\Delta = \{(c_1, c_2) \in [0, 1]^2 \mid c_1 + c_2 < 1\}$
    be the set of the possible cost profiles.
    Within the region $\Delta$, the condition $c_1 + c_2 < 1$ ensures that selecting both agents at cost is budget feasible, so $\opt = 1$ for every profile in $\Delta$. 
    Assume for contradiction that a DSIC, IR, and BB mechanism guarantees a bounded approximation to $\opt$. Then, since $\opt =1$ across $\Delta$, there must exist $\alpha >0$ such that the allocation to the creator satisfies $x_2(c_1, c_2) \ge \alpha$ for all $(c_1, c_2) \in \Delta$.

    By Myerson's Lemma~\cite{myerson1981optimal}, the payment to agent 2 given cost $c_1$ is determined by integrating over their valid unilateral deviations up to the boundary of the type space:
    \[ p_2(c_1, c_2) = c_2 x_2(c_1, c_2) + \int_{c_2}^1 x_2(c_1, z) \, dz \]
    Evaluating this at $c_2 = 0$, and noting that $x_2 \ge \alpha$ whenever $z < 1 - c_1$ (the boundary of $\Delta$), we bound the payment:
    \[ p_2(c_1, 0) = \int_{0}^{1-c_1} x_2(c_1, z) \, dz \ge \int_{0}^{1-c_1} \alpha \, dz = \alpha(1 - c_1). \]
    Similarly, applying Myerson's Lemma to the funder at $c_2 = 0$ yields:
    \[ p_1(c_1, 0) = c_1 x_1(c_1, 0) + \int_{c_1}^1 x_1(z, 0) \, dz. \]
    The mechanism must satisfy (ex-post) BB: $p_1(c_1, 0) + p_2(c_1, 0) \le x_1(c_1, 0)$. Substituting our payment identities into this constraint, we obtain:
    \[ c_1 x_1(c_1, 0) + \int_{c_1}^1 x_1(z, 0) \, dz + \alpha(1 - c_1) \le x_1(c_1, 0). \]
    Let $f(c_1) = x_1(c_1, 0)$ and $F(c_1) = \int_{c_1}^1 f(z) \, dz$. By the Fundamental Theorem of Calculus, $F'(c_1) = -f(c_1)$. Grouping terms, we have:
    \[ F(c_1) + \alpha(1 - c_1) \le (1 - c_1)f(c_1) = -(1 - c_1)F'(c_1). \]
    Rearranging produces the following first-order linear differential inequality:
    \[ F'(c_1) + \frac{F(c_1)}{1 - c_1} \le -\alpha. \]
    Multiplying by the integrating factor $\frac{1}{1-c_1}$, we get:
    \[ \frac{d}{dc_1} \left[ \frac{F(c_1)}{1 - c_1} \right] \le -\frac{\alpha}{1 - c_1}. \]
    Integrating both sides from $c_1 = 0$ to an arbitrary $y \in (0, 1)$ gives:
    \[ \frac{F(y)}{1 - y} - F(0) \le \int_0^y -\frac{\alpha}{1 - z} \, dz = \alpha \ln(1 - y). \]
    Since $F(y) \ge 0$ and $1 - y > 0$, dropping the first term preserves the inequality:
    \[ -F(0) \le \alpha \ln(1 - y) \implies F(0) \ge -\alpha \ln(1 - y). \]
    As $y \to 1$, the right-hand side $-\alpha \ln(1 - y)$ diverges to $+\infty$. However, $F(0) = \int_{0}^1 x_1(z, 0) \, dz \le 1$. This contradiction implies that no DSIC mechanism can guarantee a bounded approximation to $\opt$ over $\Delta$. 
\end{proof}

\subsection{Impossibility Result for ``Single-Agent'' Instances}\label{sec:lowerbound}

\cref{thm:dsic-hardness-approximation} showed that even for instances with one funder agent, one creator agent, and a perfect estimate of $\opt$, it is impossible to guarantee any bounded approximation of $\opt$ using truthful mechanisms when both of the agents' costs are private. Note that if we also knew the cost $c_i$ of one of the two agents, we could guarantee the optimal outcome in that class of instances: we could offer a payment of $p_i=c_i$ to the agent whose cost we know and then use the remaining budget of $r_1+r_2-c_i$ to pay the other agent (guaranteeing budget-balance). Since we also know that $\opt=1$ for this class of instances, i.e., $c_1+c_2\leq 1$, this payment would satisfy individual rationality as well. However, this solution would heavily depend on the unrealistic assumption that we know the exact value of $\opt$ in advance. What if we are instead provided with an estimate in the form of a (normalized) range $[1, \sumval]$, within which $\opt$ lies? Here, a larger value of $\sumval$ corresponds to a less informative estimate of $\opt$.

Our next result considers this alternative estimate $[1, \sumval]$ and shows that for the same class of two-agent instances as in \cref{thm:dsic-hardness-approximation}, even if we let the exact cost of the creator be public and the only strategic agent that we interact with is the funder (whose cost remains private), it is impossible to achieve better than a logarithmic approximation as a function of $\sumval$. In other words, if we effectively further reduce this to a class of ``single-agent'' instances where the goal is to elicit, or approximately estimate, the cost of the funder, the performance deteriorates as a function of $\sumval$.

\begin{theorem}\label{thm:lowerbound-general}

Let $\Delta$ be a set of two-agent instances with a linear valuation function for which $\opt\in [1, \sumval]$ for some known $\sumval\geq 1$. Even if the cost of just one of the agents is private, for any (not necessarily direct revelation) IR and BB mechanism $\mathcal{M}$ that arbitrarily interacts with the other agent, there exists an instance in $\Delta$ where the approximation of $\mathcal{M}$ is $\Omega(\log \mathcal{V})$ in every equilibrium.
\end{theorem}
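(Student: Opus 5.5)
We will reduce the problem to a single-buyer revenue-maximization problem. The target is the classical $\Omega(\log \sumval)$ gap between full-surplus extraction and optimal revenue under the equal-revenue distribution.

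\textbf{Construction.} Consider two agents with $v(x_1,x_2)=\sumval\cdot x_2$. Agent $1$ is a funder with $r_1=\sumval$ and private cost $c_1=\sumval-s$, where $s\in[1,\sumval]$. Agent $2$ is a creator with $r_2=0$ and public cost $c_2=\sumval$. Budget balance under costs reads $x_1 s\ge \sumval x_2$. The optimum therefore sets $x_1=1$ and $x_2=s/\sumval$, so $\opt=s\in[1,\sumval]$ as required.

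\textbf{Reduction to a menu.} Fix any IR and BB mechanism $\mathcal{M}$, however it interacts with agent~1. Since agent~2's cost is public, $\mathcal{M}$ must give $p_2\ge \sumval x_2$ by IR. Budget balance then gives
\[
v(\mathbf{x})=\sumval x_2\le p_2\le \sumval x_1-p_1 .
\]
Agent~1 is the only strategic player. Hence in any equilibrium, agent~1 of type $s$ selects a best response among the (randomized) outcomes that $\mathcal{M}$ can produce, and this menu does not depend on $s$. Write $a=\E[x_1]$ and $w=\E[\sumval x_1-p_1]$ for a menu item. Agent~1's expected utility is then
\[
\E[p_1-(\sumval-s)x_1]=s\cdot a-w,
\]
and the principal's expected value is at most $w$.

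This is exactly a seller offering a menu of (allocation probability, price) pairs $(a,w)$, $a\in[0,1]$, to a single buyer with value $s$. Individual rationality corresponds to $s a-w\ge 0$, which holds because agent~1 can always accept and IR holds ex post. The principal's value is bounded by the seller's revenue $w$. The tie-breaking rule in favor of participation only helps the principal, and we can absorb it by taking sup over ties.

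\textbf{Yao-style averaging.} Put the equal-revenue prior on $s$, with $\Pr[s\ge t]=1/t$ for $t\in[1,\sumval]$ and an atom at $\sumval$. Then
\[
\E[\opt]=\E[s]=1+\ln\sumval .
\]
By Myerson's characterization for a single buyer, any incentive-compatible menu earns expected revenue at most that of the best posted price, which is $\max_t t\cdot\Pr[s\ge t]=1$. Applying this to the menu induced by $\mathcal{M}$ and the equilibrium choices, $\E_s[\E[v(\mathbf{x})]]\le 1$.

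Consequently some $s$ satisfies $\E[v(\mathbf{x})]\le \opt/(1+\ln\sumval)$, which gives the $\Omega(\log\sumval)$ bound in every equilibrium.

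\textbf{Main obstacle.} The step needing care is the claim that an \emph{arbitrary} (possibly sequential, randomized) mechanism with a single strategic agent induces such a type-independent menu. We argue it via the revelation principle: map each type to its equilibrium outcome lottery; deviations to other types' strategies are available, so the IC constraints hold. For randomized mechanisms we work with expectations throughout, so it suffices that BB and IR hold pointwise and that the inequality $v\le \sumval x_1-p_1$ survives averaging.
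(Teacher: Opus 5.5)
Your proof is correct, and it follows the paper's skeleton. The instance family is the same: after dividing all costs and revenues by $\sumval$, your funder with $c_1=\sumval-s$ is the paper's funder with $c_1=1-S/\sumval$. The reduction is also the same: by the revelation principle, a single funder chooses from a type-independent menu. Only the final counting step differs.

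The paper decomposes the truthful mechanism into a distribution over threshold prices. It converts each price $p$ into a surplus target $T(p)=(1-p)\sumval$, implicitly crediting the whole leftover budget to the creator. It then takes the $\lfloor\log_2\sumval\rfloor$ dyadic instances $S_j=2^j$, weights their competitiveness constraints by $2^{-j}$, and sums them against the masses $q_i$ of the dyadic target intervals, obtaining $\alpha>\lfloor\log_2\sumval\rfloor/2$.

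You instead observe that BB together with the creator's IR gives $v(\mathbf{x})\le \sumval x_1-p_1$. This is exactly the revenue a seller extracts from a buyer of value $s$. You then apply the classical single-buyer bound under the equal-revenue prior: revenue at most $1$, while $\E[s]=1+\ln\sumval$.

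Your route is more modular. It exhibits the logarithmic gap as precisely the welfare/revenue gap of the equal-revenue distribution. It makes the bound "value $\le$ funder surplus" an explicit inequality rather than identifying the value with $T(p)$. It also gives a slightly better constant. The paper's argument is more self-contained: it needs only the decomposition of a monotone allocation rule into thresholds, not the optimality of posted prices for a single buyer under a prior with an atom.

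Your remark about taking the sup over ties is the right way to secure the quantifier "in every equilibrium." All best responses of type $s$ give the same utility, so choosing at each $s$ the best response with the largest $w$ is still an IC selection from the same menu. The $s$ produced by your averaging is then bad for every equilibrium of that instance, using an $\epsilon$-approximate selection if the sup is not attained.
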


\begin{proof}

Consider an instance consisting of two agents. Agent 1 (the funder) is the only strategic agent, with a public revenue $r_1 = 1$, personal value $v_1 = 0$, and a private cost $c_1 \in [0, 1 - 1/\mathcal{V}]$. Agent 2 (the creator) has a public cost $c_2 = 1$, generates revenue $r_2 = 0$, and provides value $v_2 = \mathcal{V}$. Any profit ($r_1 - p$) extracted from the funder can be used to fractionally allocate the creator by setting $x_2 = 1 - p$, which results in a value of $x_2 \cdot \mathcal{V} = (1 - p)\mathcal{V}$.  

Because the funder is the only strategic agent, they face a single-player game against the mechanism $\mathcal{M}$. By the Revelation Principle \citep{DHM79, myerson1981optimal}, any equilibrium of this game can be mapped to a direct revelation mechanism where the funder truthfully reports $c_1$. For a single-parameter agent, the taxation principle shows that any such truthful mechanism is characterized by a monotone allocation rule, which is equivalent to a distribution over threshold prices $p$.  

The total revenue generated by the funder is fixed at $r_1 = 1$. We define a mapping between the posted price $p$ (the payment to the funder) and a surplus target $T$ (the value generated using the remaining budget):$$T(p) = (1 - p)\mathcal{V}.$$
Since $r_1 = 1$ is the only source of funding, $T$ represents the total value achievable in the system. The funder accepts the price $p$ if and only if $p \ge c_1$. This is equivalent to:
  $$1 - \frac{T}{\mathcal{V}} \ge c_1 \iff T \le (1 - c_1)\mathcal{V}$$
Let $S = (1 - c_1)\mathcal{V}$ be the true surplus value. The mechanism's choice of a distribution over prices $p$ is equivalent to a choice of a distribution over surplus targets $T \in [0, \mathcal{V}]$. 

Notice that in this instance $\opt = S$. To be $\alpha$-competitive, a randomized mechanism (defined by a distribution over surplus targets $T$) must satisfy:
$$\mathbb{E}[\text{Value} \mid S] = \int_{0}^{S} T \, d\mathbb{P}(T) \ge \frac{S}{\alpha} \quad \text{for every possible value of $S \in [1, \mathcal{V}]$}.$$

Consider a sequence of $k = \lfloor \log_2 \mathcal{V} \rfloor$ instances where the surplus is $S_j = 2^j$ for $j = 1, \dots, k$. We partition the target $T$ into the intervals $I_0 = [0,1]$ and $I_i = (2^{i-1},2^i]$ for $i=1,\dots,k$, and define $q_i = \mathbb{P}(T \in I_i)$ for $i=0,\dots,k$.

By the definition of $\alpha$-competitiveness, for any instance $j$, the expected value must satisfy:
\begin{equation}\label{eq:valueguarantee}
    q_0 + \sum_{i=1}^j q_i \cdot 2^i
    \geq
    \mathbb{E}[\text{Value} \mid S_j]
    \geq
    \frac{2^j}{\alpha}.
\end{equation}
Indeed, targets $T>2^j$ yield zero value in this instance, while $T\leq 1$ on $I_0$ and $T\leq 2^i$ on each $I_i$.

Dividing both sides of \cref{eq:valueguarantee} by $2^j$, we obtain:
\[
    q_0 \cdot 2^{-j}
    + \sum_{i=1}^j q_i \cdot 2^{i-j}
    \geq \frac{1}{\alpha}.
\]

Summing these inequalities over all $k$ instances $j=1,\dots,k$ yields:
\[
    q_0 \sum_{j=1}^k 2^{-j}
    + \sum_{j=1}^k \sum_{i=1}^j q_i \cdot 2^{i-j}
    \geq \frac{k}{\alpha}.
\]

Exchanging the order of summation in the second term gives:
\[
    q_0 \sum_{j=1}^k 2^{-j}
    + \sum_{i=1}^k q_i \sum_{j=i}^k 2^{i-j}
    \geq \frac{k}{\alpha}.
\]

Both inner sums are geometric series satisfying:
\[
    \sum_{j=1}^k 2^{-j} < 1
    \qquad\text{and}\qquad
    \sum_{j=i}^k 2^{i-j}
    = 1+\frac12+\cdots+\frac{1}{2^{k-i}} < 2. \qquad \Rightarrow \quad  \frac{k}{\alpha}
    <
    q_0 + 2\sum_{i=1}^k q_i.
\]

Since the intervals $I_0,I_1,\dots,I_k$ are disjoint,
\[
    q_0+\sum_{i=1}^k q_i \leq 1. \qquad \Rightarrow \qquad  q_0+2\sum_{i=1}^k q_i
    =
    2\left(q_0+\sum_{i=1}^k q_i\right)-q_0
    \leq 2.
\]

It follows that $\frac{k}{\alpha}<2$, and $\alpha>\frac{k}{2}$.

Substituting $k=\lfloor\log_2 \mathcal{V}\rfloor$, we conclude that:
\[
    \alpha>
    \frac{\lfloor\log_2 \mathcal{V}\rfloor}{2}
    =
    \Omega(\log \mathcal{V}).
    \qedhere
\]
\end{proof}

\begin{remark}[Alternative Parameterizations]\label{rem:alt-param}
It is natural to ask whether alternative parametrizations, such as dependence on the number of agents $n,$
can give more meaningful approximation guarantees.
The construction in \cref{thm:lowerbound-general} provides a justification for why $\sumval$ (the upper bound of the optimal value) is the appropriate parameter for analyzing the approximation ratio in this setting.
In particular, this instance rules out the possibility of achieving a bounded approximation with respect to other natural parameters commonly studied in auction theory:
\begin{itemize}
\item \textbf{Number of Agents ($n$):} The base construction relies on exactly two agents ($n=2$). Because the $\Omega(\log \mathcal{V})$ lower bound applies for a fixed $n$, the approximation ratio grows with the scale of the surplus $\mathcal{V}$, implying that no bounded approximation exists parameterized by $n$ alone. 
\item \textbf{Maximum Singleton Value ($v_{max}$):} If we adapt the instance by replacing the single creator (cost $1$, value $\mathcal{V}$) with $m$ identical creators, each with cost $1/m$ and marginal value $\mathcal{V}/m$, the underlying math and total maximum value $\mathcal{V}$ remain entirely unchanged. However, the maximum single-agent value becomes $v_{\max} = \mathcal{V}/m$. By making $m$ arbitrarily large, we can make $v_{\max}$ arbitrarily small while $\mathcal{V}$ remains large. Thus, the approximation ratio relative to the highest single-agent value, $\max_i(v_i)$, is unbounded.
\end{itemize}
\end{remark}

%% file: bafo_results.tex
\section{BAFO: Make Your Wish Come True}\label{sec:bafo}
In the previous section, we established that truthfulness imposes severe limitations on value, bounding efficiency by $O(\log \sumval)$, even with the knowledge that $\opt \in [1, \sumval]$. We now demonstrate that relaxing the truthfulness constraint and adopting a sequential, pay-as-bid format allows us to guarantee this logarithmic approximation in equilibrium. Our main result is the \guessbafo\ mechanism, which guarantees an $O(\log \sumval)$ approximation to the optimal value \opt\ in \emph{every} subgame perfect equilibrium (SPE).

\begin{theorem}\label{thm:guessbafo}
    The Guess-and-BAFO mechanism achieves an $O(\log \sumval)$ approximation to the optimal value \opt\ in every subgame perfect equilibrium (SPE) in the worst case.
\end{theorem}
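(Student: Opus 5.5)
We treat \cref{thm:guessbafo} as the combination of two independent pieces: a probabilistic ``guessing'' argument and a deterministic ``wish-come-true'' lemma about the BAFO stage. The mechanism draws $j$ uniformly from $\{0,1,\dots,\lceil\log_2\sumval\rceil\}$ and sets the target $\overline v=2^j$. It then runs the sequential BAFO process, with agents bidding in a fixed order $1,\dots,n$, each seeing all earlier bids. Finally, it selects some $\mathbf x$ with $v(\mathbf x)\ge\overline v$ and $\sum_i x_i b_i\le\sum_i x_i r_i$, paying $x_i b_i$ (or selects nothing if no such $\mathbf x$ exists). Since $\opt\in[1,\sumval]$, there is a unique index $j^*$ with $2^{j^*}\le\opt<2^{j^*+1}$. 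This index is drawn with probability $1/(\lceil\log_2\sumval\rceil+1)$. Because values are nonnegative, it suffices to prove the following key lemma: whenever $\overline v\le\opt$, every SPE of the induced BAFO subgame ends with an allocation of value at least $\overline v$. Granting the lemma, the expected value is at least $2^{j^*}/(\lceil\log_2\sumval\rceil+1)\ge\opt/(2(\log_2\sumval+2))$, which gives $O(\log\sumval)$. IR holds because an agent can always bid at least $c_i$ and receive $x_ib_i\ge x_ic_i$; bidding below cost is weakly dominated, and we argue below that it is never needed. BB holds by the selection rule.

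For the key lemma, fix $\overline v\le\opt$ and let $\mathbf x^*$ be an optimal first-best allocation, so $v(\mathbf x^*)\ge\overline v$ and $\sum_i x^*_i c_i\le\sum_i x^*_i r_i$. The selection rule is \emph{price-monotone}: if a bid profile $\mathbf b$ admits a feasible target-meeting allocation, then so does every $\mathbf b'\le\mathbf b$ coordinatewise. We prove by backward induction over the agent order the following invariant. At the node where agent $k$ moves, if the profile $(b_1,\dots,b_{k-1},c_k,\dots,c_n)$ admits a feasible target-meeting allocation, then under SPE continuation the final profile admits one as well.

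For the base case, the last agent, if its bid at cost yields feasibility, must choose some bid that keeps the outcome feasible. Otherwise the agent would be excluded with utility $0$, whereas bidding $c_n$ gives utility at least $0$ with participation preferred under ties. The same comparison does the inductive step. If agent $k$ chooses a bid $b_k$ such that the profile $(b_{\le k},c_{>k})$ is infeasible, then by the inductive hypothesis together with the fact that later agents bid at least cost (IR-dominance), the final profile is infeasible, so every agent is unselected and gets $0$. Bidding $c_k$ instead preserves the invariant and leads, by induction, to a feasible outcome.

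The main obstacle is this inductive step. Utility $0$ from bidding $c_k$ ties with being excluded, so the argument must rely on the tie-breaking rule that agents prefer participation, and it must also verify that an agent's deviation does not let the principal's selection exclude it in a way that breaks the comparison. I would handle this by fixing a selection rule that is consistent across subgames, for instance choosing a feasible target-meeting allocation lexicographically or maximizing $\sum_i x_i$. I would then argue only about feasibility of the \emph{outcome}, not about agent $k$'s own inclusion. Concretely, the claim to prove is that in any SPE no agent moves the play from the feasible set of continuation profiles to the infeasible one. The possible worry is that an agent might strictly prefer a risky high bid that leads to infeasibility. That deviation gives every agent $0$, which is weakly worse than staying feasible, and the tie-break toward participation rules it out. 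In the degenerate case where agent $k$ gets $0$ either way, I would invoke the tie-break convention as stated in the preliminaries; if needed, I would strengthen the mechanism to resolve ties in favor of a feasible outcome, following the analysis in \cite{GML25}.
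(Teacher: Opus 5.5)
Your decomposition matches the paper's. You guess a power-of-two target, lose a $\Theta(\log\sumval)$ factor for hitting the right one, and reduce to the wish-come-true lemma. That lemma is proved by backward induction on the invariant that $(b_{<k},c_{\ge k})$ admits a feasible target-meeting allocation, and your guessing arithmetic is correct.

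The gap is in the incentive step of that induction, which you flag yourself but do not close. Your comparison is ``an infeasible outcome gives agent $k$ zero utility without participation, while bidding $c_k$ leads to a feasible outcome.'' This only helps if the feasible outcome reached after bidding $c_k$ actually has $x_k>0$. Otherwise agent $k$ is indifferent, and an SPE may break that tie toward infeasibility. None of your proposed fixes closes this:
\begin{itemize}
\item ``Weakly worse'' does not rule out the infeasible choice in an SPE.
\item A deterministic or lexicographic selection rule does not guarantee that $k$ is included.
\item The preliminaries' tie-break (prefer larger $x_i$ at zero utility) says nothing when $x_k=0$ either way.
\item Strengthening the mechanism, or assuming agents break ties toward feasibility, changes the model rather than proving the theorem.
\end{itemize}
Your base case has the same issue: it tacitly assumes that bidding $c_n$ yields participation. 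As a smaller point, the implication ``conditional profile infeasible $\Rightarrow$ terminal profile infeasible'' does not use the induction hypothesis. It needs only that later \emph{winners} bid at least their cost in an SPE, plus price monotonicity; non-winners may bid anything.

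The missing idea is a case split on whether agent $k$ is pivotal at $\hat{\mathbf p}=(b_{<k},c_k,c_{>k})$.
\begin{itemize}
\item \emph{Agent $k$ is not pivotal.} Some allocation $y$ feasible at $\hat{\mathbf p}$ has $y_k=0$. Any bid $b_k$ changes only coordinate $k$, so by price monotonicity $y$ remains feasible at the child's conditional profile. The induction hypothesis then applies, and agent $k$ simply cannot destroy feasibility.
\item \emph{Agent $k$ is pivotal.} Every allocation feasible at $\hat{\mathbf p}$ uses $k$. Let $\mathbf x^c$ be the terminal allocation after $k$ bids $c_k$; it is feasible by the induction hypothesis. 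Suppose $x^c_k=0$. Every later winner bids at least its cost, so lowering their bids to cost and applying price monotonicity makes $\mathbf x^c$ feasible at $\hat{\mathbf p}$ without using $k$, a contradiction. Hence $x^c_k>0$, and bidding $c_k$ gives zero utility with positive participation. The participation tie-break then makes any feasibility-destroying bid strictly worse.
\end{itemize}
Your ``degenerate case'' is exactly the first case, where no extra assumption is needed.
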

The mechanism achieving this result is remarkably simple. It operates using two components. First, given the (normalized) range of $\opt \in [1,\sumval]$, it samples a target value threshold $T$ from a geometric scale over the range $[1, \sumval]$. Then, it uses a sequential BAFO protocol committed to a specific selection rule: {output some $\alloc \neq 0$ only if there exists $\alloc \in [0,1]^n$ such
that $\sum_i x_i \cdot b_i \leq \sum_i x_i \cdot r_i$ and $v(\alloc) \geq T.$\footnote{If multiple such $\alloc$ exist, it picks one according to a pre-defined tie-breaking rule, e.g.,
the maximum-value one.}
See 
\cref{alg:guess_bafo} for a formal description of the mechanism.}

\begin{algorithm}[ht]\caption{ \guessbafo~Mechanism}\label{alg:guess_bafo}
\SetAlgoLined\KwIn{agents $N=\{1, \dots, n\}$;  revenues $r_i$; upper bound of the optimal value $\sumval$.}
\KwOut{Allocation $\alloc^* \in [0,1]^n$, payments $\pay \in \mathbb{R}^n$.}
\BlankLine\tcp{Phase 1: The Guessing Phase}
Construct the set of powers of 2: $L \gets \{2^k \mid 2^k \leq \sumval, k \in \mathbb{N} \cup \{0\}\}$;

Uniformly select a target threshold $T$ from $L$ at random;

\BlankLine\tcp{Phase 2: Sequential BAFO Protocol}Announce the selection rule: ``Maximize $v(\alloc)$ subject to $\sum x_i b_i \leq \sum x_i r_i$ and $v(\alloc) \geq T$.'';

Approach agents sequentially (in an arbitrary order); each agent $i$ submits a BAFO bid $b_i$;

\BlankLine\tcp{Phase 3: Final Selection}Let $\mathbf{b}$ be the final bid vector. Compute the optimal valid set: $\alloc^* \in \arg\max_x \{\sum x_i v_i \mid \left(\sum x_i b_i \leq \sum x_i r_i\right) \land \left(v(\alloc) \geq T\right)\}$;

\If{no such $\alloc^*$ exists}{Output $\alloc = \mathbf{0}, \pay = \mathbf{0}$;}

\Else{Allocate to $\alloc^*$ and pay $p_i = b_i \cdot x_i$ for all winners;}
\end{algorithm}

The effectiveness of this approach relies on a fundamental property of the sequential BAFO protocol which we term the \emph{wish-come-true} capability. We show that if the principal defines a selection rule based on \emph{any} price-monotonic condition (such as a value threshold) that is feasible under the agents' true costs, the sequential structure ensures that agents will coordinate to satisfy this condition in \emph{every} SPE. This general property allows the principal to enforce the conjectured optimal value, effectively reducing the value maximization problem to a guessing problem.

In stark contrast to DSIC mechanisms—where the approximation ratio remains unbounded even if the optimal value $\opt$ is known—the sequential BAFO protocol allows a principal with this knowledge to achieve \opt\ in every SPE.
\begin{corollary}\label{cor:withprediction}
If the principal knows the optimal value $\opt$, then the Sequential BAFO mechanism employing the selection rule
\[
\alloc \in \arg\max_{\mathbf{x}} \bigg\{ v(\alloc) \bigm| \left(\sum x_i b_i \le \sum x_i r_i\right) \land \left(v(\alloc) \ge \opt\right) \bigg\}
\]
achieves value exactly $\opt$ in every subgame perfect equilibrium.
\end{corollary}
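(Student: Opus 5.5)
The plan is to show two things. First, in every SPE the mechanism outputs a nonzero allocation. Second, any allocation it outputs has value at least $\opt$. The second part is immediate from the selection rule: any output $\alloc^*$ satisfies $v(\alloc^*)\ge \opt$. It also satisfies $\sum x_i b_i\le\sum x_i r_i$, and since payments are $p_i=b_ix_i$, BB holds ex post. So the whole difficulty is to show that the equilibrium bid profile $\mathbf{b}$ admits some feasible $\alloc$ with $v(\alloc)\ge\opt$, i.e., that agents never collectively ``overbid'' the target into infeasibility. For IR, I will also note that in SPE no winner bids below cost. Bidding $b_i<c_i$ with $x_i>0$ gives negative utility, while bidding $b_i\ge c_i$ guarantees nonnegative utility. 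So we may assume, or derive, $b_i\ge c_i$ whenever $x_i>0$.

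The key step is a backward-induction invariant, the ``wish-come-true'' property. Fix an optimal first-best allocation $\alloc^{o}$ with $\sum_i x^o_i c_i\le\sum_i x^o_i r_i$ and $v(\alloc^o)=\opt$. I will prove by reverse induction on the position $k$ in the order the following claim. At any history where agents $1,\dots,k-1$ have bid $b_1,\dots,b_{k-1}$ such that the profile $(b_1,\dots,b_{k-1},c_k,\dots,c_n)$ admits a feasible $\alloc$ with $v(\alloc)\ge\opt$, the SPE continuation ends with a nonzero feasible output. In fact I will show the stronger statement that the continuation profile keeps this property.

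Consider agent $k$'s choice. Because the condition is monotone in prices, the set of bids $b_k$ for which $(b_{<k},b_k,c_{>k})$ is still feasible is an interval $[0,\beta_k]$ with $\beta_k\ge c_k$. There are two cases.
\begin{itemize}
\item If agent $k$ bids $b_k>\beta_k$, the remaining agents cannot restore feasibility at prices at or above their costs. Any feasible output would then require some later agent to be paid below cost, which that agent will never accept in equilibrium (formally, later agents bid at least cost when they win). So the output is either $\mathbf 0$ or an allocation excluding agent $k$, and agent $k$ gets zero utility.
\item If agent $k$ bids in $[c_k,\beta_k]$, the induction hypothesis guarantees a nonzero feasible outcome, and agent $k$ gets weakly positive utility.
\end{itemize}
Under the tie-breaking rule that favors participation, agent $k$ weakly prefers staying within $[c_k,\beta_k]$. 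The claim I must verify carefully is that no strictly profitable deviation to $b_k>\beta_k$ exists. Such a deviation could be profitable only if agent $k$ were still selected at the higher bid, which would contradict infeasibility under costs for the later agents.

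The main obstacle is exactly this last point: making rigorous that later agents cannot ``subsidize'' an overbid by agent $k$. They could only do so by bidding below their own cost and then being selected. I will handle it by strengthening the induction statement to hold for every history. The strengthened statement is: in every SPE continuation, each later agent $j$ bids $b_j\ge c_j$ or receives $x_j=0$. Hence feasibility of the final profile implies feasibility of the profile in which bids are replaced by costs for the agents after $k$. This shows that an overbid beyond $\beta_k$ leads to an output not using $k$. That output is the empty outcome, or some allocation in which $k$ is unselected, and either way $k$ gets zero utility.

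Finally, the tie-breaking assumption rules out the zero-utility outcome being chosen over participation. At the root, the base history with all costs is feasible by the choice of $\alloc^o$. The invariant then propagates to the final bid vector, which yields an output with value $v(\alloc^*)\ge \opt$. Since $\opt$ is the maximum over cost-feasible allocations and $b\ge c$ on the support, the value is exactly $\opt$.
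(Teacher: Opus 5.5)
Your overall route is the paper's. The corollary is \cref{thm:implement} applied to the price-monotonic predicate $\bigl(\sum_i x_ib_i\le\sum_i x_ir_i\bigr)\land\bigl(v(\alloc)\ge\opt\bigr)$, proved by backward induction on the invariant that feasibility at the conditional prices $(b_{<k},c_k,c_{>k})$ survives. Your last step (winners bid at least cost, so the output is cost-feasible and $v(\alloc)\le\opt$) correctly supplies ``exactly''.

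The gap is in the step that should make this hold in \emph{every} SPE. You show only that a bid in $[c_k,\beta_k]$ gives agent $k$ weakly more utility than an overbid $b_k>\beta_k$. You then frame the remaining task as ruling out a strictly profitable deviation \emph{to} an overbid. That is the wrong direction: it shows that staying feasible is a best response, not that overbidding cannot be an equilibrium action. An overbid yields $\mathbf 0$, i.e., zero utility and no participation. If every feasible-range bid also gave $k$ zero utility with $x_k=0$, agent $k$ would be indifferent, and an SPE could have $k$ overbid and destroy feasibility. Your closing appeal to tie-breaking only helps once you show that some feasible-range bid leads to $x_k>0$, and the proposal never shows this. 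Nor is this a corner case: with the target exactly $\opt$ there is often no slack at true costs. In the two-agent instance of \cref{thm:lowerbound-general} one checks $\beta_k=c_k$, so the best achievable utility is zero and the whole argument rests on the tie-breaking rule.

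The missing step is the paper's Case~2 claim.
\begin{itemize}
\item If some feasible allocation at $\hat{\mathbf p}(s)=(b_{<k},c_k,c_{>k})$ does not use $k$ (so $\beta_k=\infty$), then no bid of $k$ can destroy feasibility, by price monotonicity.
\item Otherwise every such allocation uses $k$. Apply the induction hypothesis at the child reached by bidding $c_k$. The terminal output $\alloc^c$ is then feasible at the terminal bids $\mathbf b^c$.
\item Lower the bids of the later winners to their costs; this is allowed because they bid at least cost in the continuation SPE. Price monotonicity then shows $\alloc^c$ is feasible at $\hat{\mathbf p}(s)$, hence $\alloc^c_k>0$.
\end{itemize}
So bidding $c_k$ gives zero utility \emph{with} participation. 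The tie-breaking rule ranks this strictly above the $\mathbf 0$ outcome of any overbid, so no SPE bid exceeds $\beta_k$. You already have this cost-lowering argument, but you apply it only to the overbid branch; it is needed on the $c_k$ branch.
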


In \cref{sec:bafoprotocol}, we formally define the sequential BAFO protocol and establish the general wish-come-true result. In \cref{sec:approxvaluebafo}, we instantiate this framework with the \guessbafo\ algorithm and prove the logarithmic approximation guarantee.

\subsection{BAFO as a Constraint Solver}\label{sec:bafoprotocol} The BAFO interaction proceeds in $n$ rounds\footnote{We identify agent $i$ with round $i$; this is only for notational convenience and can be done by relabeling the agents.}. In every round $i$, 
the principal asks the $i$-th agent for their bid, and the agent submits it knowing all bids reported by agents $j <i$. 
In round $i$, agent $i$ observes the history of bids $b_{<i} = (b_1, \dots, b_{i-1})$ and submits a best and final offer $b_i \ge 0$. This bid represents a binding commitment to sell at price $b_{i}$ if selected.\footnote{We note that the price is a per-unit price, i.e., if the allocation is $x_i \leq1$, the price we pay to the agent is $x_i b_i$.} After all bids $b = (b_{1}, \dots, b_{n})$ are collected, the principal selects a subset of winners $\alloc \in [0,1]^n$ according to some public selection rule $f(\bids)$ and pays winning agents $p_{i} = b_{i} \cdot x_{i}$.

The BAFO procedure was originally analyzed by Gkatzelis et al.~\cite{GML25} for maximizing the principal's utility in procurement mechanisms, where the valuation function belongs to a broad class of combinatorial functions.
We identify a fundamental capability of the protocol that we call ``wish-come-true'': if a specific market condition is feasible at the agents' true costs, the sequential structure ensures that agents coordinate to satisfy that condition in every SPE. We think this property of sequential BAFO is of independent interest, given the generality of conditions the protocol can enforce. 

To formalize this, we model the principal's requirement as a Boolean predicate $\Phi(\alloc, \mathbf{b})$. We identify two key conditions—one on the predicate and one on the selection rule—that are sufficient to guarantee this property.

\begin{definition}[Price Monotonic Predicate]\label{def:monotone}A predicate $\Phi(\alloc, \mathbf{b})$ is price monotonic if for any allocation $\alloc$ and any two bid vectors $\mathbf{b}, \mathbf{b}'$:$$\left( \Phi(\alloc, \mathbf{b}) = \text{True}  \land \forall  i\  s.t.\ x_i >0,\ b'_i \le b_i \right) \implies \Phi(\alloc, \mathbf{b}') = \text{True}.$$\end{definition}
In the context of the \guessbafo\ mechanism, the predicate $\Phi$ is the conjunction of the budget constraint and the target value constraint. As we show later, such a predicate is price monotonic.

Next, we specify the conditions the principal's selection rule and the predicate need to satisfy. The principal need not necessarily choose the value-maximizing outcome,
provided they commit to picking \emph{some} valid outcome if one exists, and do not allocate to any agent if 
no such outcome exists.
\begin{definition}[$\Phi$-consistent Selection Rule]\label{def:consistent_rule}A selection rule $f(\mathbf{b})$ is $\Phi$-consistent if it satisfies:
\begin{enumerate}
\item \textbf{Feasibility Compliance:} If the set of valid allocations $\mathcal{X}_\Phi(\mathbf{b}) = \{\alloc \mid \Phi(\alloc, \mathbf{b}) = \text{True}\}$ is non-empty, then $f(\mathbf{b}) \in \mathcal{X}_\Phi(\mathbf{b})$. On the other hand, if the set of valid allocations $\mathcal{X}_\Phi(\mathbf{b}) = \{\alloc \mid \Phi(\alloc, \mathbf{b}) = \text{True}\}$ is empty, then $f(\mathbf{b}) = (0,\ldots,0).$
\item \textbf{Deterministic Tie-Breaking:} If multiple allocations satisfy $\Phi$, the principal selects one according to a public, deterministic ordering. 
\end{enumerate}
\end{definition}

With these conditions met, the protocol guarantees success:

\begin{theorem}\label{thm:implement}
Consider a sequential BAFO procedure with a price-monotonic predicate $\Phi$ and a $\Phi$-consistent selection rule $f$.
If there exists any allocation $\alloc^*$ that is feasible at the agents' true costs (i.e., $\Phi(\alloc^*,\mathbf{c})=\textnormal{True}$),
then in every SPE induced by $f$, the terminal allocation $\alloc$ satisfies
$\Phi(\alloc,\mathbf{b})=\textnormal{True}$.
\end{theorem}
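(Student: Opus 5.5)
Suppose, for contradiction, that some SPE $\sigma$ ends at a bid vector $\mathbf{b}$ with $\mathcal{X}_\Phi(\mathbf{b})=\emptyset$. By $\Phi$-consistency the outcome is then $\mathbf{0}$, so every agent gets utility zero. The plan is to find the \emph{last} agent whose move could still have rescued the outcome, and show that this agent has a profitable deviation at that node. That contradicts subgame perfection.

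The key object is a "truthful-tail" deviation built around the feasible allocation $\alloc^*$. For each round $k$, consider the history $b_{<k}$ on the SPE path and the following candidate bids for agents $k,\dots,n$: agent $j$ bids $\min\{b_j^{\sigma'}, c_j\}$, where $\sigma'$ denotes what they would actually do. It is simpler to argue directly by backward induction using the following claim.

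\textbf{Claim ($\star$).} At any node of round $k$ with history $b_{<k}$, if there is an allocation $\alloc$ with $\Phi(\alloc,(b_{<k},c_{\ge k}))=\textnormal{True}$, then the SPE continuation ends at a bid vector $\mathbf{b}$ with $\mathcal{X}_\Phi(\mathbf{b})\neq\emptyset$.

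I will prove ($\star$) by backward induction on $k$ from $n$ down to $1$.

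\textbf{Base case ($k=n+1$).} All bids are fixed, and the hypothesis is exactly non-emptiness.

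\textbf{Inductive step.} Take a round-$k$ node with such a witness $\alloc$. Agent $k$ can bid $c_k$. The resulting history $(b_{<k},c_k)$ still satisfies the hypothesis of ($\star$) at round $k+1$ with the same $\alloc$. By induction, the continuation then ends at a nonempty valid set, so some allocation $\alloc'$ is selected.

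Here is the main obstacle: agent $k$ may not be selected in $\alloc'$. That gives utility zero, which is no better than the failure outcome. So I cannot conclude that agent $k$ strictly prefers to bid $c_k$ on utility grounds alone. The argument must instead show directly that whatever agent $k$ actually bids in the SPE, say $\beta$, the continuation still reaches nonempty validity. I split into two cases.

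\textbf{Case 1: $\beta\le c_k$, or $x_k=0$.} By price monotonicity, $\alloc$ remains a witness for $(b_{<k},\beta,c_{>k})$: lowering a price of an allocated agent preserves $\Phi$, and the bid of an unallocated agent is irrelevant. Induction then applies.

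\textbf{Case 2: $\beta>c_k$ and $x_k>0$.} If the continuation after $\beta$ led to failure, agent $k$ would get $0$. By deviating to $c_k$, agent $k$ gets at least $0$. That alone is not strict, so I strengthen the induction hypothesis to:
\begin{center}
\emph{the continuation ends in a nonempty valid set, and any agent $j\ge k$ who ends with zero utility is indifferent.}
\end{center}
The tie-breaking assumption then resolves indifference in favor of participation: an agent prefers a larger $x_j$ at zero utility. So agent $k$ strictly prefers an outcome in which they are selected at bid $c_k$ over failure. If bidding $c_k$ does not get them selected, I use a fallback: agent $k$ bids $c_k$, and the rest of the argument uses only Case 1 reasoning applied to $\beta$. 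Concretely, I would restrict the witness to allocations with $x_k=0$ whenever possible; otherwise agent $k$ is pivotal and is selected with $x_k>0$ at bid $c_k$.

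\textbf{Conclusion.} Applying ($\star$) at the root with $\alloc^*$ and the true cost vector $\mathbf{c}$ finishes the proof. The delicate part is the Case 2 indifference handling, which relies on the participation tie-breaking rule and deterministic tie-breaking from \cref{def:consistent_rule}.
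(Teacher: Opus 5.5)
Your invariant ($\star$) is the one the paper proves by backward induction. Your eventual case split is also the paper's: either some witness at $(b_{<k},c_{\ge k})$ avoids agent $k$, or every witness uses $k$, with participation tie-breaking then giving the strict preference. The first case you handle correctly. The gap is in the second case. You assert that if every witness uses $k$, then agent $k$ ``is pivotal and is selected with $x_k>0$ at bid $c_k$,'' but nothing you establish implies this. After the deviation to $c_k$, the induction hypothesis only tells you that the \emph{terminal} bid vector $(b_{<k},c_k,b_{>k}(s^c))$ admits some valid allocation, and $f$ may select any of them. A priori, $f$ could pick an allocation that excludes $k$ and is valid only because some later agents bid below their costs. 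Such an allocation is not a witness at $(b_{<k},c_{\ge k})$, so it does not contradict the case assumption. Your proposed strengthening of the hypothesis (``any agent with zero utility is indifferent'') says nothing about later agents' bids and does not close this. Likewise, the ``fallback'' of applying Case 1 reasoning to $\beta$ is unavailable when $\beta>c_k$ and the witness uses $k$.

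The missing ingredient is that in any SPE continuation, every agent $j>k$ with $x_j>0$ bids $b_j\ge c_j$. Bidding $c_j$ guarantees utility exactly $0$ whatever follows, so a selected agent bidding below cost would have a profitable deviation. With this, suppose $x^c=f(\mathbf{b}(s^c))$ had $x^c_k=0$. Lower each later winner's bid to $c_j$ and apply price monotonicity: earlier bids are unchanged and agent $k$'s coordinate is irrelevant. This gives $\Phi(x^c,(b_{<k},c_{\ge k}))=\textnormal{True}$ with $x^c_k=0$, contradicting the case assumption. Hence bidding $c_k$ yields zero utility with positive participation, and your tie-breaking argument then finishes the step. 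If the continuation after the equilibrium bid $\beta$ ended in failure, agent $k$ would get zero utility with zero participation, which is strictly worse. So the continuation after $\beta$ reaches a nonempty valid set. Add this step, and drop the vacuous strengthened hypothesis and the abandoned contradiction and truthful-tail framings. Your argument then coincides with the paper's.
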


 \noindent \textbf{Remark (Feasibility vs. Optimality).} \cref{thm:implement} guarantees that \emph{a} valid solution will be found if one exists, but not necessarily the \emph{optimal} one. Agents might coordinate on an SPE that satisfies $\Phi$ but yields lower value than another valid outcome if that specific equilibrium offers them higher payoffs. This is why the \guessbafo\ mechanism must ``guess'' a target $T$: by embedding the efficiency requirement directly into the feasibility predicate $\Phi$ (via $v(\alloc) \ge T$), we force the agents to coordinate on a high-value outcome to satisfy the constraint.

\begin{proof}
Fix an arbitrary pure SPE $\sigma$.\footnote{The argument extends to mixed
SPE by applying the induction to every action in the support at every
subgame. In the second case below, an action that destroys conditional
feasibility yields zero allocation, whereas bidding one's cost yields zero
utility with positive allocation. Hence, such an action cannot belong to the
support.}
Let $s$ be any node at level $k$ (so agent $k$ moves next), with past bids
$b_{<k}(s)$. Define the conditional price vector
\[
\hat p_i(s)=
\begin{cases}
b_i(s) & \text{if } i<k,\\
c_i & \text{if } i\ge k.
\end{cases}
\]
Let $\mathbf{b}(s)$ be the terminal bid vector generated by $\sigma$ from
$s$, and let $\alloc(s)=f(\mathbf{b}(s))$ be the resulting terminal
allocation.

We prove by backward induction on $k=n,n-1,\ldots,1$ that, for every node
$s$ at level $k$,
\begin{equation}\label{eq:goal}
\mathcal X_\Phi\bigl(\hat{\mathbf p}(s)\bigr)\neq\emptyset
\quad\Longrightarrow\quad
\mathcal X_\Phi\bigl(\mathbf{b}(s)\bigr)\neq\emptyset.
\end{equation}

We first record a simple observation that will be used throughout the proof.
At any node where agent $i$ moves, bidding $b_i=c_i$ guarantees
utility
$
(b_i-c_i)x_i=0
$
regardless of the terminal allocation. Therefore, in an SPE continuation,
any agent who is selected with positive allocation must bid at least their
cost. 
In particular, for any node $s$ at level $k$,
\begin{equation}\label{eq:nonneg}
i\ge k
\quad\text{and}\quad
\alloc_i(s)>0
\quad\Longrightarrow\quad
b_i(s)\ge c_i.
\end{equation}

\paragraph{Base case ($k=n$).}
Fix a node $s$ at level $n$ and suppose that
\[
\mathcal X_\Phi\bigl(\hat{\mathbf p}(s)\bigr)
=
\mathcal X_\Phi\bigl(b_{<n}(s),c_n\bigr)
\neq\emptyset.
\]

First suppose that there exists
$y\in\mathcal X_\Phi(\hat{\mathbf p}(s))$ with $y_n=0$.
Let $b_n$ be agent $n$'s equilibrium bid. Since changing the bid of an
agent with zero allocation does not change any bid on the support of $y$,
price monotonicity implies
$
\Phi\bigl(y,(b_{<n}(s),b_n)\bigr)=\textnormal{True}.
$
Hence,
$
\mathcal X_\Phi\bigl(\mathbf b(s)\bigr)\neq\emptyset.
$

Now suppose that every allocation in
$\mathcal X_\Phi(\hat{\mathbf p}(s))$ assigns agent $n$ a positive
allocation. Consider the deviation in which agent $n$ bids $c_n$. Under
this deviation, the terminal price vector is exactly
$\hat{\mathbf p}(s)$. Since this feasible set is nonempty,
$\Phi$-consistency implies that the resulting allocation is feasible.
Moreover, by assumption, it assigns agent $n$ a positive allocation.
Thus, bidding $c_n$ gives agent $n$ zero utility with positive
participation.

If agent $n$'s equilibrium bid instead resulted in
$
\mathcal X_\Phi\bigl(\mathbf b(s)\bigr)=\emptyset,
$
then $\Phi$-consistency would imply that the terminal allocation is
$\mathbf 0$. Agent $n$ would therefore receive zero utility with no
participation, which is strictly worse, under the tie-breaking rule, than
bidding $c_n$ and receiving zero utility with positive participation.
This contradicts optimality of the equilibrium action. Therefore,
\[
\mathcal X_\Phi\bigl(\mathbf b(s)\bigr)\neq\emptyset,
\]
proving \eqref{eq:goal} for level $n$.

\paragraph{Inductive step.}
Assume that \eqref{eq:goal} holds for all nodes at levels
$k+1,\ldots,n$. Fix a node $s$ at level $k$ and suppose that
\[
\mathcal X_\Phi\bigl(\hat{\mathbf p}(s)\bigr)\neq\emptyset,
\qquad
\hat{\mathbf p}(s)
=
\bigl(b_{<k}(s),c_k,c_{>k}\bigr).
\]

We distinguish two cases.

\medskip
\noindent
\emph{Case 1: There exists a feasible allocation that does not use agent
$k$.}

Suppose that there exists
$y\in\mathcal X_\Phi(\hat{\mathbf p}(s))$ with $y_k=0$.
Let $b_k$ be agent $k$'s equilibrium bid at $s$, and let $s'$ be the
child reached after this bid. Then
$
\hat{\mathbf p}(s')
=
\bigl(b_{<k}(s),b_k,c_{>k}\bigr).
$
The two conditional price vectors differ only in coordinate $k$, and
$y_k=0$. Hence, all prices on the support of $y$ are unchanged. By price
monotonicity,
$
\Phi\bigl(y,\hat{\mathbf p}(s')\bigr)=\textnormal{True}.
$
Therefore,
$
\mathcal X_\Phi\bigl(\hat{\mathbf p}(s')\bigr)\neq\emptyset.
$
Applying the induction hypothesis at $s'$ gives
$
\mathcal X_\Phi\bigl(\mathbf b(s')\bigr)\neq\emptyset.
$
Since $\mathbf b(s')=\mathbf b(s)$, this proves \eqref{eq:goal} in this
case.

\medskip
\noindent
\emph{Case 2: Every feasible allocation uses agent $k$.}

Suppose that every
$y\in\mathcal X_\Phi(\hat{\mathbf p}(s))$ satisfies $y_k>0$.
Consider the deviation in which agent $k$ bids $c_k$, and let $s^c$ be
the corresponding child. Since
$
\hat{\mathbf p}(s^c)
=
\bigl(b_{<k}(s),c_k,c_{>k}\bigr)
=
\hat{\mathbf p}(s),
$
we have
\[
\mathcal X_\Phi\bigl(\hat{\mathbf p}(s^c)\bigr)\neq\emptyset.
\]
By the induction hypothesis,
$
\mathcal X_\Phi\bigl(\mathbf b(s^c)\bigr)\neq\emptyset.
$

Let
$
\alloc^c=f\bigl(\mathbf b(s^c)\bigr)
$
be the terminal allocation following this deviation. By
$\Phi$-consistency,
\[
\Phi\bigl(\alloc^c,\mathbf b(s^c)\bigr)=\textnormal{True}.
\]

We claim that $\alloc^c_k>0$. Suppose otherwise that
$\alloc^c_k=0$. Every future agent $i>k$ with
$\alloc^c_i>0$ bids at least their cost by \eqref{eq:nonneg}, because
the restriction of an SPE to the subgame following $s^c$ is itself an
SPE. Starting from the terminal vector $\mathbf b(s^c)$, lower the bid
of each such future winner to $c_i$. The bids of agents $i<k$ remain
unchanged, agent $k$ already bids $c_k$, and changes to the bids of
agents receiving zero allocation are irrelevant. Price monotonicity
therefore implies
\[
\Phi\bigl(\alloc^c,\hat{\mathbf p}(s)\bigr)
=
\textnormal{True}.
\]
But $\alloc^c_k=0$, contradicting the assumption that every feasible
allocation at $\hat{\mathbf p}(s)$ uses agent $k$. Hence,
$
\alloc^c_k>0.
$
Thus, by bidding $c_k$, agent $k$ can guarantee zero utility with
positive participation.

Now let $b_k$ be agent $k$'s equilibrium bid at $s$, and let $s'$ be
the child reached under the equilibrium strategy. We claim that
$
\mathcal X_\Phi\bigl(\hat{\mathbf p}(s')\bigr)\neq\emptyset.
$
Suppose, toward a contradiction, that
$
\mathcal X_\Phi\bigl(\hat{\mathbf p}(s')\bigr)=\emptyset.
$
We show that this also implies
\[
\mathcal X_\Phi\bigl(\mathbf b(s')\bigr)=\emptyset.
\]
Indeed, suppose instead that the latter set were nonempty. Then, by
$\Phi$-consistency, the terminal allocation
$
\alloc(s')=f\bigl(\mathbf b(s')\bigr)
$
would satisfy
$
\Phi\bigl(\alloc(s'),\mathbf b(s')\bigr)=\textnormal{True}.
$
For every future winner $i>k$, \eqref{eq:nonneg} gives
$b_i(s')\ge c_i$. Lowering the bids of all such future winners to their
true costs and applying price monotonicity would then give
\[
\Phi\bigl(\alloc(s'),\hat{\mathbf p}(s')\bigr)
=
\textnormal{True},
\]
contradicting
$\mathcal X_\Phi(\hat{\mathbf p}(s'))=\emptyset$.
Consequently,
$\
\mathcal X_\Phi\bigl(\mathbf b(s')\bigr)=\emptyset,
$
and $\Phi$-consistency implies that the terminal allocation is
$\mathbf 0$. Agent $k$ would therefore receive zero utility with no
participation. This is strictly worse, under the tie-breaking rule, than
the deviation $b_k=c_k$, which gives zero utility with
$\alloc^c_k>0$. This contradicts the assumption that $b_k$ is an
equilibrium action.

It follows that
$
\mathcal X_\Phi\bigl(\hat{\mathbf p}(s')\bigr)\neq\emptyset.
$
Applying the induction hypothesis at $s'$ gives
$
\mathcal X_\Phi\bigl(\mathbf b(s')\bigr)\neq\emptyset.
$
Since $\mathbf b(s')=\mathbf b(s)$, this proves \eqref{eq:goal}.

This completes the backward induction. At the root $s_{\mathrm{root}}$,
we have
$
\hat{\mathbf p}(s_{\mathrm{root}})=\mathbf c.
$
By the premise of the theorem,
$\mathcal X_\Phi(\mathbf c)\neq\emptyset$, so \eqref{eq:goal} implies
that the terminal feasible set is nonempty:
$
\mathcal X_\Phi(\mathbf b)\neq\emptyset.
$
Finally, $\Phi$-consistency of $f$ implies
$
f(\mathbf b)\in\mathcal X_\Phi(\mathbf b),
$
and hence the terminal allocation $\alloc=f(\mathbf b)$ satisfies
\[
\Phi(\alloc,\mathbf b)=\textnormal{True}.
\qedhere
\]
\end{proof}

\subsection{Approximating Optimal Value via Guessing}\label{sec:approxvaluebafo}

We now leverage the wish-come-true property (\cref{thm:implement}) to approximate the optimal value. The strategy is to augment the standard budget constraint with a target value threshold $T$, restricting the feasible set to allocations that are both budget-balanced and satisfy $v(\alloc) \ge T$. Since this combined predicate is price-monotonic, \cref{thm:implement} guarantees that if a chosen threshold $T$ is feasible (i.e., $T \le \opt$), the agents will coordinate to satisfy it in \emph{every} SPE.

Consequently, the mechanism designer's challenge reduces to selecting a threshold $T$ that is sufficiently ambitious, yet feasible. We refer to the specific instantiation of the protocol used in Phase 2 of \cref{alg:guess_bafo} with a fixed target $T$ as \emph{Targeted-BAFO}. We analyze the performance of this approach in two settings: one where the principal has an \emph{educated guess} of the optimal value, and one where they do not.

\paragraph{Learning-Augmented BAFO.}
We first consider the setting where the principal has access to a prediction $\pred$ with a known error bound $\bar{\eta} \ge 1$ such that $\opt/\bar{\eta} \le \pred \le \bar{\eta} \cdot \opt$. In this case, the principal can set a deterministic threshold derived from the prediction. The proof is deferred to \cref{apx:bafo}.

\begin{theorem}\label{thm:learning_bafo}
     Targeted-BAFO with $T = \pred / \bar{\eta}$ achieves a value of at least $\frac{\opt}{\bar{\eta}^2}$ in every SPE.
\end{theorem}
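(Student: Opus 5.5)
The plan is to reduce the statement to \cref{thm:implement}, using the predicate
\[
\Phi_T(\alloc,\bids) \;=\; \Big(\textstyle\sum_i x_i b_i \le \sum_i x_i r_i\Big) \wedge \big(v(\alloc)\ge T\big), \qquad T=\pred/\bar{\eta}.
\]
The proof has three steps: (i) check that $\Phi_T$ is price monotonic and that the Targeted-BAFO selection rule is $\Phi_T$-consistent; (ii) show that $\Phi_T$ is satisfiable at the true costs; (iii) read off the value bound.

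\textbf{Step (i).} For price monotonicity, fix $\alloc$ and suppose $b'_i\le b_i$ for every $i$ with $x_i>0$. Then $\sum_i x_i b'_i \le \sum_i x_i b_i$, since coordinates with $x_i=0$ contribute nothing. So the budget inequality is preserved. The value condition $v(\alloc)\ge T$ does not depend on $\bids$ at all.

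For $\Phi_T$-consistency, the selection rule picks a maximum-value allocation among those satisfying $\Phi_T$ under a fixed deterministic tie-breaking order, and it outputs $\mathbf 0$ when no such allocation exists. This is exactly \cref{def:consistent_rule}.

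One technical point: the argmax must be attained. I would argue this via compactness, or else weaken the rule to ``select some valid allocation by a fixed rule.'' \cref{thm:implement} only needs feasibility compliance, not optimality, so the weaker rule suffices.

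\textbf{Step (ii).} This is where the prediction is used. Since $\pred\le\bar{\eta}\cdot\opt$, we get $T=\pred/\bar{\eta}\le \opt$. Let $\alloc^{\opt}$ be a first-best allocation, so that $\sum_i x^{\opt}_i c_i\le \sum_i x^{\opt}_i r_i$ and $v(\alloc^{\opt})=\opt\ge T$. Then $\Phi_T(\alloc^{\opt},\mathbf c)=\textnormal{True}$.

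If the maximum defining $\opt$ were not attained, I would use an allocation with value arbitrarily close to $\opt$. When $T<\opt$ this works directly; the boundary case $T=\opt$ needs attainment. I expect attainment to hold under standard regularity, such as upper semicontinuity of $v$ on the compact feasible set, and this is the main (minor) obstacle.

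\textbf{Step (iii).} \cref{thm:implement} now applies. In every SPE the terminal allocation satisfies $\Phi_T(\alloc,\bids)=\textnormal{True}$, and in particular $v(\alloc)\ge T$. Using $\pred\ge \opt/\bar{\eta}$,
\[
v(\alloc)\;\ge\; T \;=\; \frac{\pred}{\bar{\eta}} \;\ge\; \frac{\opt}{\bar{\eta}^2}.
\]

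Budget balance of the realized outcome also holds, because the payments are $p_i=x_i b_i$ and $\Phi_T(\alloc,\bids)$ contains the budget constraint with respect to the bids. Individual rationality follows from the observation in the proof of \cref{thm:implement} that any winning agent bids at least their cost in an SPE.
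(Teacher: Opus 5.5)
Your proposal is correct and follows essentially the same route as the paper. You verify price monotonicity and $\Phi$-consistency, use $\pred \le \bar{\eta}\cdot\opt$ to get $T\le\opt$ so that a first-best allocation witnesses feasibility at the true costs, invoke \cref{thm:implement}, and finish with $\pred\ge\opt/\bar{\eta}$ to get $v(\alloc)\ge\opt/\bar{\eta}^2$. Your extra remarks on attainment of the maxima and on ex-post BB and IR are harmless refinements of points the paper takes for granted.
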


\vspace{-0.5cm}
\paragraph{The \guessbafo\ Mechanism.}
Without {targeted estimates of \opt}, the principal cannot pinpoint a safe optimal threshold. To address this, the \guessbafo\ mechanism employs a randomized search over a geometric scale of possible values $L = \{2^k \mid 2^k \leq \sumval\}$. This effectively randomizes the choice of $T$ passed to the Targeted-BAFO subroutine.

\begin{proof}[Proof of \cref{thm:guessbafo}]
    Let $\opt$ denote the maximum value achievable by any budget-balanced fractional allocation with respect to the true costs $c$. That is,$$\opt = \max_{x \in [0,1]^n} \left\{v(x) \mid \sum x_i c_i \le \sum x_i r_i\right\}.$$
    Consider the threshold $T^* \in L$ such that $\opt/2 < T^* \le \opt$. Since the set $L$ contains powers of 2 up to $\sumval$, the size of the relevant search space is $|L| = \Theta(\log \sumval)$. The probability that the mechanism selects exactly this threshold is $1/|L| = \Omega(1/\log \sumval)$.

    Suppose the mechanism selects $T^*$. 
    We analyze the constraints imposed in Phase 3. 
    The buyer requires an allocation $\alloc$ satisfying the following conditions:
    budget-balance: $\sum x_i b_i \le \sum x_i r_i$ and minimum value: $v(\alloc) \ge T^*$.
    Let $\Phi(x, b)$ be the conjunction of these two conditions. Note that $\Phi$ is \emph{price monotonic} according to \cref{def:monotone}: if an allocation is budget-balanced at prices $b$, it strictly satisfies it at lower prices $b' \le b$, and the value constraint depends only on the allocation $\alloc$, not the prices.
    Since $T^* \le \opt$, we know by definition that there exists at least one allocation (the optimal set) that is feasible with respect to true costs $c$ and satisfies the value threshold. Therefore, by \cref{thm:implement}, the sequential BAFO protocol acts as a feasibility solver and will guarantee an outcome $x_{out}$ such that $\Phi(x_{out}, b)$ is True.
    Consequently, whenever $T^*$ is selected, the realized value is at least $T^* > \opt/2$. The expected value is therefore:
    \[\mathbb{E}\left[v(\alloc)\right] \ge \Pr\left[T=T^*\right] \cdot \frac{\opt}{2} = \Omega\left(\frac{\opt}{\log \sumval}\right). \qedhere\]
\end{proof}

\paragraph{Coordination Failure in Simultaneous Protocols.}For our positive results, we relied on the {wish-come-true} property of the sequential protocol. A natural question is whether this property holds for simultaneous BAFO protocols. We show that it does not: simultaneous mechanisms suffer from bad equilibria due to coordination failure.

Consider an instance with two agents, a linear valuation $v(\alloc) = 0 \cdot x_1 + 2 \cdot x_2$, and revenue-cost profiles $(r_1, c_1) = (1, 0.5)$ and $(r_2, c_2) = (0, 0.5)$. The optimal solution selects both agents, generating a value of $\opt = 1$. Suppose the designer knows $\opt=1$ and commits to a rule: ``Select a budget-balanced allocation with value at least $1$.'' In a simultaneous bidding game, the profile $(b_1, b_2) = (1, 1)$ constitutes a Nash equilibrium where no trade occurs, since there is no valid strategy that either agent can deviate to in order to trigger the allocation.

%% file: mms.tex
\section{Approximating the \mms\ Benchmark}\label{sec:mms}
In  \cref{sec:opt-lower-bounds}, we demonstrated that the optimal budget-balanced value, \opt, is an {overly} ambitious benchmark for self-funded markets. We proved that no DSIC mechanism can achieve a non-trivial approximation, and even under relaxed equilibrium concepts, efficiency is bounded by $\Omega(\log \sumval)$, where $\sumval$ is the upper bound of $\opt$. These negative results rely on ``thin'' market instances, where revenue generation depends entirely on a single agent. In such quasi-monopolistic settings, the lack of competition allows pivotal agents to exert excessive pricing power, making the extraction of their surplus impossible.

This motivates the search for a benchmark that accurately captures the \emph{competitiveness} of an instance. {Due to the nature of our problem, it is in “thick” markets that we can expect mechanisms to be able to generate value.}

\paragraph{The Challenge: Classifying Heterogeneous Agents.}
Established metrics for market thickness (e.g., \emph{frugal solution} or \emph{$\mathcal{F}^{(2)}$)} are not sufficient in our setting because agent contributions are two-fold: they provide both \emph{value} and \emph{revenue}. In a budget-balanced ecosystem, a high-value agent might only be feasible due to the cross-subsidy provided by a specific revenue-generating agent. Consequently, one cannot simply classify agents into ``types'' (e.g., value generators vs. revenue generators) to check for competition within each class; the complex interdependence of costs, revenues, and values makes individual-level classification intractable. See \cref{apx:benchmark} for a more detailed discussion of different benchmarks in our setting.

To circumvent these issues, we shift our focus from individual agents to \emph{aggregate sub-markets}. Since we cannot easily determine whether a specific agent faces competition, we instead test whether the market \emph{as a whole} faces competition from within itself. We ask: is the market robust enough to be split into two disjoint, viable sub-economies?

We formalize this using the \emph{Maximin Share} (\mms) benchmark, which is a well-studied notion in fair division. We adapt \mms\ here as a means to test for market competition. Intuitively, this benchmark creates a hypothetical scenario where the market is partitioned into two competing sub-markets (in the best way possible). If the original market is truly competitive, both sub-markets should be able to create meaningful value. On the other hand, if the market relies on a single agent (as in our lower bounds), one of the sub-markets will inevitably fail, collapsing the benchmark to zero.

\begin{definition}[Maximin Share (MMS)]
Given an instance $\mathcal{I} = (v, \revenue, \cost)$, the Maximin Share is the maximum value of the worst sub-market across all possible bipartitions of the agents:
\[
\mms(\mathcal{I}) = \max_{(P_1, P_2) \in \Pi_2(N)} \min \Big\{ \mathcal{W}(P_1,\mathcal{I}), \mathcal{W}(P_2,\mathcal{I}) \Big\},
\]
where $\Pi_2(N)$ denotes the set of all disjoint partitions $P_1 \cup P_2 = N$, and $\mathcal{W}(S,\mathcal{I})$ denotes the optimal budget-balanced value achievable when restricted to the sub-market $S$.
\end{definition}

This benchmark effectively filters out the problematic instances from \cref{sec:opt-lower-bounds}. In those cases, the single revenue provider must belong to either $P_1$ or $P_2$; the competing sub-market without them has no budget, causing the minimum value to be zero. By contrast, in a large, redundant market (e.g., one that is a union of two identical economies), the sub-markets can compete on equal footing, and the MMS captures a constant fraction of the optimal value ($\mms \approx \opt/2$).

 We now aim to approximate the \mms\ benchmark. 
We first show that truthfulness remains a barrier: even against this relaxed benchmark, no truthful mechanism can achieve a constant approximation. In contrast, we propose a new mechanism, the \emph{\ladderbafo}, and prove that it achieves a constant-factor approximation to the \mms\ benchmark in every SPE.

\subsection{Lower Bound for Truthful Mechanisms}

As a first step towards understanding the approximability of $\mms$ by general mechanisms, we ask
whether DSIC mechanisms can achieve a constant-factor approximation to it for every instance
$\mathcal{I}.$ \cref{thm:mms-logn-lower} gives a negative answer, showing that any DSIC mechanism
must suffer a $\log(n)$ degradation, where $n$ is the number of agents. 

The intuition stems from the fundamental gap between the \emph{available} budget in the system and the \emph{extractable} budget under truthful mechanisms.
In our setting, ``funders'' generate slack defined by the difference between their fixed revenue and their private cost ($s_i = r_i - c_i$). To pay for the expensive creators, the mechanism must capture this slack.
While the MMS benchmark can utilize the \emph{total} available slack (since it knows true costs), a truthful mechanism is limited by information asymmetry. In particular, we construct distributions in which any truthful mechanism, unable to price-discriminate perfectly, can only capture a small fraction of it ($1/\log n$).

\begin{theorem}\label{thm:mms-logn-lower}
   Consider any randomized direct-revelation mechanism $\mathcal{M}$ that is
truthful in expectation, IR in expectation, and ex-post BB. Then, the worst-case approximation of $\mathcal{M}$ to the \mms\ benchmark is $\Omega(\log n)$.
\end{theorem}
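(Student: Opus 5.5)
Following the intuition sketched just before the statement, I would reduce to a pricing problem in which a truthful mechanism must extract the slack $r_i-c_i$ of many funders whose costs are private. By Yao's principle it suffices to fix a distribution over instances and show that every truthful-in-expectation, IR, ex-post BB mechanism has expected value at most $O(\mathbb{E}[\mms]/\log n)$. The instance would have about $n/2$ funders, each with $r_i=1$, value $0$, and private cost $c_i$, together with a supply of creators with public cost (or cost known to lie in a tiny range), revenue $0$, and linear value. The creators would be split into two identical copies so that every partition can give each side half the funders and half the creators. Whatever slack the mechanism extracts, it can spend on creators at rate one unit of value per unit of budget, so the value it obtains is essentially the total extracted slack $\sum_i (r_i - p_i)x_i$. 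The $\mms$ benchmark, by contrast, obtains about half of the true total slack $\sum_i(1-c_i)$, because the funders' slack is split evenly across the two halves and the creators are duplicated.

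The key step is to choose cost distributions under which no truthful mechanism can capture more than a $1/\log n$ fraction of the total slack. For each funder, truthfulness plus the single-parameter structure means that, by Myerson's lemma (the same tool used in \cref{thm:dsic-hardness-approximation}), agent $i$'s interim allocation is monotone in $c_i$ and its payment is determined by the allocation. The slack the mechanism extracts from funder $i$ is then at most what a posted price (a distribution over thresholds) can extract. I would use the equal-revenue-type construction from \cref{thm:lowerbound-general}, now across agents rather than over the range of $\opt$. Setting $s_i = 1-c_i$, I would make the slacks heavy-tailed, for instance with funder $i$'s slack a scaled equal-revenue variable (so that $\Pr[s_i\ge t]$ is proportional to $1/t$ on $[1/n,1]$), or more simply make the slacks of the $n$ funders take the deterministic geometric scales $2^{-j}$ in a random permutation. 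In either design the benchmark total $\sum_i s_i$ is $\Theta(\log n)$ times the largest slack, while any threshold scheme extracts $O(1)$ of that in expectation. The harmonic-sum calculation would mirror the geometric-series argument in the proof of \cref{thm:lowerbound-general}: summing the competitiveness constraints across scales yields a bound of $k/\alpha < 2$ with $k=\Theta(\log n)$.

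The main obstacle is that the mechanism sees all reports jointly, so it is not a product of independent posted prices. It could correlate allocations across funders, and ex-post BB couples them. To handle this, I would bound the extracted slack by the sum over funders of each funder's virtual surplus $\mathbb{E}[x_i(1-\phi_i(c_i))]$, where $\phi_i$ is the virtual cost and the expectation is over the funder's own cost conditioned on the others' reports. I would then choose the distributions so that the virtual slack $1-\phi_i(c_i)$ is nonpositive except at the top of the support. This is exactly the equal-revenue property: the expected extractable slack per funder is $O(1/n)$, while the expected true slack per funder is $\Theta(\log n/n)$. Summing over funders gives an extracted total of $O(1)$ versus a benchmark of $\Theta(\log n)$. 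The remaining steps are routine: checking that creators are abundant enough that both $\opt$ and $\mms$ are limited by slack rather than by value, and confirming that the mechanism's value is at most the extracted slack, which uses ex-post BB together with IR for the creators.
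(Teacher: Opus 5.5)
Your equal-revenue route is correct and is essentially the paper's proof. The paper draws each funder's slack independently as $s_i\sim\mathrm{Unif}\{1,\frac12,\ldots,\frac1m\}$, which is a discrete equal-revenue law. It then bounds the value by the extracted funder slack through creator IR and expected BB. Its step-function summation of Myerson's payment identity is precisely your virtual-surplus calculation, giving at most $\bar x_i(0)/m\le 1/m$ extractable slack per funder against expected true slack $H_m/m$.

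The one real difference is on the benchmark side. The paper uses a single creator of cost $H_m/2$ and value $1$ per half, and needs Paley--Zygmund to get a constant-probability event on which $\mms\ge 1$. Your creators, with value equal to cost and enough capacity, give $\mms\ge(\sum_i s_i-\max_i s_i)/2$ pointwise via a slack-balancing partition. So directly comparing $\mathbb{E}[v(\mathbf{x})]$ with $\mathbb{E}[\mms]$ suffices, which is a clean simplification. That comparison is plain averaging, valid for any randomized mechanism, rather than Yao's principle; truthful-in-expectation mechanisms need not be mixtures of deterministic truthful ones.

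However, drop the ``more simply'' alternative with slacks $2^{-j}$ in a random permutation: your claim that ``either design'' works is false.
\begin{itemize}
\item The sum $\sum_j 2^{-j}<1$ is less than twice the largest slack $1/2$, not $\Theta(\log n)$ times it.
\item More fundamentally, a random permutation of fixed scales makes the costs perfectly correlated, so $\mathbf{c}_{-i}$ reveals $c_i$. A DSIC mechanism can then offer each funder a take-it-or-leave-it price computed from $\mathbf{b}_{-i}$ alone, extract all the slack, and spend it on creators while keeping ex-post BB. It therefore attains $\opt\ge\mms$, and this remains true even with harmonic scales $1/j$.
\end{itemize}
Your virtual-surplus bound ``conditioned on the others' reports'' reduces to the $O(1/n)$ equal-revenue bound only because the funders' costs are independent. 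Independence is an essential ingredient and should be stated explicitly.
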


\begin{proof}
Let \(n=2m+2\), where \(m\geq 2\), and write
$
    H_m:=\sum_{k=1}^m \frac{1}{k}.
$
Consider two disjoint sets \(F_1,F_2\) of \(m\) funders and two
creators \(a_1,a_2\). For every funder \(i\in F_1\cup F_2\), set
\(r_i=1\), independently draw
\[
    s_i
    \sim
    \operatorname{Unif}
    \left\{
        1,\frac12,\ldots,\frac1m
    \right\},
\]
and set \(c_i=1-s_i\). For the creators, set
$
    r_{a_1}=r_{a_2}=0$,
    and $
    c_{a_1}=c_{a_2}=\frac{H_m}{2},
$
and let
$
    v(\alloc)=x_{a_1}+x_{a_2}.
$
Let \(\mathcal D\) denote the resulting product distribution over
cost profiles.

For each \(\ell\in\{1,2\}\), let
\[
    Z_\ell:=\sum_{i\in F_\ell}s_i.
\]
If \(Z_\ell\geq H_m/2\), then fully allocating every agent in
\(F_\ell\cup\{a_\ell\}\) is budget feasible, since
\[
    \sum_{i\in F_\ell\cup\{a_\ell\}}(r_i-c_i)
    =
    Z_\ell-\frac{H_m}{2}
    \geq 0.
\]
This allocation has value \(1\). Hence, on the event
\[
    E
    :=
    \left\{Z_1\geq\frac{H_m}{2}\right\}
    \cap
    \left\{Z_2\geq\frac{H_m}{2}\right\},
\]
the partition
$
    \bigl(F_1\cup\{a_1\},\,F_2\cup\{a_2\}\bigr)
$
witnesses that the \(\mms\) benchmark of the realized instance is at
least \(1\).

We next show that \(E\) has constant probability. For every funder,
\[
    \E[s_i]=\frac{H_m}{m},
    \qquad
    \E[s_i^2]
    =
    \frac1m\sum_{k=1}^m\frac1{k^2}
    \leq
    \frac{\pi^2}{6m}.
\]
Consequently, for each \(\ell\in\{1,2\}\),
\[
    \E[Z_\ell]=H_m,
    \qquad
    \E[Z_\ell^2]
    =
    \E[Z_\ell]^2+\operatorname{Var}(Z_\ell)
    \leq
    H_m^2+\frac{\pi^2}{6}.
\]
The Paley--Zygmund inequality~\cite{PZ32} gives
\[
    \Pr\left[Z_\ell\geq\frac{H_m}{2}\right]
    \geq
    \frac{H_m^2}
         {4\left(H_m^2+\pi^2/6\right)}
    \geq
    \frac{H_2^2}
         {4\left(H_2^2+\pi^2/6\right)}
    >0.
\]
Since \(Z_1\) and \(Z_2\) are independent,
\(\Pr_{\cost\sim\mathcal D}[E]\) is bounded below by a universal
positive constant.

For every report profile \(\bids\), define
$
    \bar x_i(\bids)
    :=
    \E_{\mathcal M}[x_i(\bids)],
$ and $
    \bar p_i(\bids)
    :=
    \E_{\mathcal M}[p_i(\bids)],
$
where the expectations are over the internal randomization of
\(\mathcal M\). At every truthful cost profile \(\cost\), IR in
expectation for the two creators and the expectation of the ex-post
BB constraint imply
\begin{align}
    \frac{H_m}{2}
    \E_{\mathcal M}\bigl[v(\alloc(\cost))\bigr]
    &=
    \frac{H_m}{2}
    \bigl(
        \bar x_{a_1}(\cost)+\bar x_{a_2}(\cost)
    \bigr)
    \notag\\
    &\leq
    \bar p_{a_1}(\cost)+\bar p_{a_2}(\cost)
    \notag\\
    &\leq
    \sum_{i\in F_1\cup F_2}
    \bigl(
        \bar x_i(\cost)-\bar p_i(\cost)
    \bigr).
    \label{eq:mms-value-from-funder-slack}
\end{align}

We now bound the expected contribution of a single funder to the last
sum. Fix \(i\in F_1\cup F_2\), condition on \(\cost_{-i}\), and
suppress \(\cost_{-i}\) from the notation. Truthfulness in expectation
implies that \(\bar x_i(z)\) is non-increasing in \(z\), and the
single-parameter payment identity gives
\[
    \bar p_i(z)
    =
    z\bar x_i(z)
    +
    \int_z^1\bar x_i(y)\,dy
    +
    \bigl(
        \bar p_i(1)-\bar x_i(1)
    \bigr).
\]
The final term is nonnegative by IR in expectation at type \(1\).
Therefore, for every \(s\in[0,1]\),
\begin{equation}
    \bar x_i(1-s)-\bar p_i(1-s)
    \leq
    s\bar x_i(1-s)
    -
    \int_0^s\bar x_i(1-t)\,dt.
    \label{eq:mms-pointwise-funder-slack}
\end{equation}

Because \(t\mapsto\bar x_i(1-t)\) is non-decreasing, for every
\(k\in[m]\),
\[
    \int_0^{1/k}\bar x_i(1-t)\,dt
    \geq
    \sum_{j=k}^{m-1}
    \left(
        \frac1j-\frac1{j+1}
    \right)
    \bar x_i\left(1-\frac1{j+1}\right).
\]
Averaging \cref{eq:mms-pointwise-funder-slack} over the \(m\) possible
costs of funder \(i\), and exchanging the two sums gives
\begin{align*}
    \E\left[
        \bar x_i(c_i)-\bar p_i(c_i)
        \,\middle|\,
        \cost_{-i}
    \right]
    \leq
    \frac1m
    \left[
        \sum_{k=1}^m
        \frac1k
        \bar x_i\left(1-\frac1k\right)
        -
        \sum_{j=1}^{m-1}
        \frac1{j+1}
        \bar x_i\left(1-\frac1{j+1}\right)
    \right]
    =
    \frac{\bar x_i(0)}{m}
    \leq
    \frac1m.
\end{align*}
Averaging over \(\cost_{-i}\) and summing over the \(2m\) funders
therefore gives
\[
    \E_{\cost\sim\mathcal D}
    \left[
        \sum_{i\in F_1\cup F_2}
        \bigl(
            \bar x_i(\cost)-\bar p_i(\cost)
        \bigr)
    \right]
    \leq 2.
\]
Taking expectations in
\cref{eq:mms-value-from-funder-slack}, we conclude that
\[
    \E_{\cost\sim\mathcal D}
    \E_{\mathcal M}
    \bigl[v(\alloc(\cost))\bigr]
    \leq
    \frac{4}{H_m}.
\]

Since the mechanism's value is nonnegative and \(E\) has probability
bounded below by a universal positive constant,
\[
    \E_{\cost\sim\mathcal D}
    \left[
        \E_{\mathcal M}
        \bigl[v(\alloc(\cost))\bigr]
        \,\middle|\,
        E
    \right]
    =
    O\left(\frac1{H_m}\right).
\]
Hence, there exists a deterministic cost profile
\(\cost^\star\in E\) such that
\[
    \E_{\mathcal M}
    \bigl[v(\alloc(\cost^\star))\bigr]
    =
    O\left(\frac1{H_m}\right),
\]
while the \(\mms\) benchmark of the corresponding instance is at
least \(1\). The approximation ratio on this instance is therefore
$    \Omega(H_m)
    =
    \Omega(\log m)
    =
    \Omega(\log n).
$
For general \(n\geq 6\), let
$
    m=\left\lfloor\frac{n-2}{2}\right\rfloor.
$
The construction uses \(2m+2\) agents; if \(2m+2<n\), add a dummy
agent with zero cost, zero revenue, and zero contribution to the
valuation. IR in expectation makes the dummy agent's expected payment
nonnegative, so its presence can only tighten the budget constraint
and does not affect the argument. Since \(m=\Theta(n)\), the same
\(\Omega(\log n)\) lower bound follows. 
Values of \(n\) are immaterial to the asymptotic statement.
\end{proof}

\subsection{The \ladderbafo\ Mechanism} 
Our main result in this section (\cref{thm:constant-factor-approx-MMS}) is a 
mechanism that achieves
a constant-factor approximation to $\mms$ under \emph{every} subgame perfect equilibrium (SPE).
\begin{algorithm}[ht]
\caption{\ladderbafo}\label{alg:alt_wish_bafo}
\SetAlgoLined
\KwIn{Partition $(S_1,S_2)$; revenues $r_i$; upper bound of the optimal value $\sumval$; accuracy parameter $\varepsilon\in(0,1)$; failure probability $\delta\in(0,1)$; starting wish $\kappa>0$ }
\KwOut{Allocation $x\in[0,1]^n$, payments $p\in\mathbb{R}^n$.}

\BlankLine
\textbf{Parameters:}
Set $\eta \gets \varepsilon/2$\;
Set $L \gets \left\lceil \log_{1+\eta}(\sumval/\kappa)\right\rceil+2$\;
Set $q \gets \delta/L$\;
Set $T_1\gets \kappa$ and $T_{t+1}\gets (1+\eta)\,T_t$ for $t\ge 1$\;

\BlankLine
\textbf{State:}
Initialize incumbent outcome $(\bar \alloc,\bar b)\gets (0,0)$\;

\BlankLine
\For{$t=1,2,\dots,L$}{
    Let $G_t \gets S_1$ if $t$ is odd, else $G_t \gets S_2$\;
    \tcp{Round $t$: run Sequential BAFO on $G_t$ with the wish predicate $\Phi_{G_t,T_t}$.}
    Announce the selection rule $f_{G_t,T_t}$\;
    Approach sellers in $G_t$ sequentially (in a public fixed order); collect bids $b^t$.
    
    Let $\alloc^t \gets f_{G_t,T_t}(b^t)$.

    \uIf{$\alloc^t = 0$}{
        \tcp{Wish failed: implement the previous successful incumbent.}
        Output $\alloc\gets \bar \alloc$ and payments $p_i \gets \bar b_i\,\bar x_i$ for all $i$; \textbf{stop}.
    }\Else{
        \tcp{Wish succeeded: update the incumbent.}
        Set $(\bar \alloc,\bar b)\gets (\alloc^t,b^t)$.
        
        With probability $q$, output $\alloc\gets \bar \alloc$ and $p_i\gets \bar b_i\,\bar x_i$ for all $i$; \textbf{stop}.
        
        Otherwise (with probability $1-q$), continue to round $t+1$.
    }
}
\BlankLine
\tcp{(This line is never reached if $T_L>\sumval$, but is included for completeness.)}
Output $\alloc^* \gets \bar \alloc$ and payments $p_i \gets \bar b_i\,\bar x_i$ for all $i$.
\end{algorithm}

\begin{theorem}\label{thm:constant-factor-approx-MMS}
The \ladderbafo\ achieves a constant approximation to $\mms$
\emph{in expectation} (over the mechanism's internal randomization) in every subgame perfect equilibrium (SPE) in the worst-case.
\end{theorem}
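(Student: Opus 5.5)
The plan has two parts. First, show that in every SPE the ladder climbs essentially to the smaller of the two sub-market optima, $\min\{\wel(S_1,\mathcal I),\wel(S_2,\mathcal I)\}$, for whatever partition $(S_1,S_2)$ it receives. Second, show that for the partition actually used (a uniformly random split, which I take to be how the mechanism is invoked) this minimum is $\Omega(\mms(\mathcal I))$ with constant probability. Combining them gives $\E[v(\alloc)]=\Omega(\mms)$, where the expectation is over the mechanism's internal randomness, with $\varepsilon,\delta$ fixed constants. Throughout I normalize so that $\kappa\le\mms$ (e.g.\ $\kappa=1$ on the range $[1,\sumval]$); the case $\mms<\kappa$ is handled by the normalization.

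\textbf{Step 1 (each rung is climbed).} Define $W^*:=\min_j \wel(S_j,\mathcal I)$. I would prove, by backward induction over rounds, a round-level analogue of \cref{thm:implement}: in every SPE, at any round $t$ with $T_t\le \wel(G_t,\mathcal I)$, the group $G_t$ succeeds, i.e.\ $\alloc^t\neq 0$. The predicate $\Phi_{G_t,T_t}$ (budget balance with respect to bids and $v\ge T_t$) is price monotonic, exactly as in the proof of \cref{thm:guessbafo}. The argument of \cref{thm:implement} then goes through once we check one thing: for an agent $k\in G_t$, bidding $c_k$ and being selected still yields zero utility with \emph{positive} participation, while failure yields zero participation.

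This check is where the ladder structure matters. Success in round $t$ is paid out only if the process stops at $t$. That happens either by the coin, with probability $q>0$, or because the opposing group later fails. Hence a selected agent who bids $c_k$ has expected allocation at least $q\,x_k>0$ and utility exactly $0$. If $G_t$ fails, the output is the other group's incumbent, and agent $k$ receives nothing. So the tie-breaking rule again forces coordination on success. One more point needs care: the continuation value of succeeding may be higher than zero for agents bidding above cost, but this only strengthens the preference for success. The inequality \eqref{eq:nonneg}, that future winners bid at least their cost, still holds because bidding one's cost guarantees nonnegative utility in every continuation.

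\textbf{Step 2 (accounting for the ladder).} By Step 1, every round with $T_t\le W^*$ succeeds. The first failure therefore occurs at some $t^\star$ with $T_{t^\star}>W^*$, and the incumbent then satisfies $v(\bar\alloc)\ge T_{t^\star-1}\ge W^*/(1+\eta)^2$. There is one subtlety: the failing round may belong to the group with the larger optimum, in which case the bound is still at least $T_{t^\star-1}$. Early stopping by the coin occurs before round $t^\star$ with probability at most $Lq=\delta$, because $t^\star\le L$ by the choice of $L$. Since outputs are nonnegative, $\E[v(\alloc)]\ge (1-\delta)\,W^*/(1+\eta)^2$. I also need to confirm that the output is IR and BB. It is the incumbent of a successful round, so BB holds with respect to its bids. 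Every selected agent bids at least their cost, which gives IR.

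\textbf{Step 3 (random split versus MMS) — the main obstacle.} Fix an MMS-optimal partition $(P_1,P_2)$ with optimal allocations $y^1,y^2$. I want to show $\Pr[W^*\ge c\cdot\mms]\ge c'$. The difficulty is that restricting a budget-feasible $y^\ell$ to a random half of $P_\ell$ can destroy budget balance, since the feasible set is not downward closed.

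The first thing I would try is a scaling argument. Intersect $P_\ell$ with $S_j$, keep all funders' allocations, and scale the creators' allocations by
\[
\lambda=\min\Bigl\{1,\ \frac{\text{funder slack}}{\text{creator deficit}}\Bigr\},
\]
computed within that intersection. This restores feasibility. Subadditivity then gives $v(\lambda y)\ge v(y)/\lceil 1/\lambda\rceil$. What remains is a joint event on which, for $(\ell,j)=(1,1)$ and $(2,2)$ (or the symmetric assignment), the intersection keeps a constant fraction of both the slack and the value of $y^\ell$. I would split into cases according to whether some single agent carries a constant fraction of the slack or of the value. In the dominant case, that agent lands on the matching side with probability $1/2$, independently across the two sides. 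In the spread case, a second-moment argument (Paley--Zygmund, as in \cref{thm:mms-logn-lower}) applies, together with a fractional halving via subadditivity, $v(y)\le v(y|_{A})+v(y|_{\bar A})$.

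If this joint-event argument turns out to be too lossy, the fallback is to use a non-uniform split. For example, I would give the mechanism two independent random partitions and take the better equilibrium outcome in expectation, or simply lose a larger constant.
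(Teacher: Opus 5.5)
Your Steps 1 and 2 follow the paper's route (\cref{lem:round_success_fixed_partition} and \cref{thm:fixed_partition_extract}); the termination coin is exactly what makes bidding $c_k$ strictly better than failure. The genuine gap is Step 3. That step carries the real content of the theorem, and you leave it as an unexecuted plan. As sketched, it does not close:
\begin{itemize}
\item Subadditive halving gives value capture for $y^\ell$ with probability at least $1/2$.
\item Paley--Zygmund gives slack capture with \emph{some} constant probability.
\item Nothing in your dominant/spread dichotomy makes these two events occur \emph{jointly} with constant probability. They are driven by overlapping agents whenever funders also carry value. A lower bound below $1/2$ cannot be combined with a probability-$1/2$ event by a union bound.
\end{itemize}
The fallback is not admissible either. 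Running two random partitions and keeping the better equilibrium outcome is a different mechanism, so it does not prove the statement about \ladderbafo.

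The missing idea is to split each MMS witness allocation $x^*$ by agent \emph{type} rather than by concentration. Let $C$ be the funders ($r_i\ge c_i$) and $D$ the creators in its support. Subadditivity gives $v(x^*)\le v(x^*|_C)+v(x^*|_D)$.
\begin{itemize}
\item If $v(x^*|_C)\ge\frac15 v(x^*)$, no scaling is needed. Any restriction of $x^*|_C$ is self-financing, and by symmetry $A$ keeps half of $v(x^*|_C)$ with probability at least $1/2$.
\item Otherwise $v(x^*|_D)\ge\frac45 v(x^*)$. Let $\sigma_A=\sum_{i\in A\cap C}(r_i-c_i)x^*_i$ and $\delta_A=\sum_{i\in A\cap D}(c_i-r_i)x^*_i$. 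The slack event $\{\sigma_A\ge\frac12\sigma_{\mathrm{total}}\}$ depends only on where the agents of $C$ land. The value event $\{v(x^*|_{A\cap D})\ge\frac12 v(x^*|_D)\}$ depends only on $D$. The two events are therefore independent.
\item Each of these events has probability at least $1/2$ purely by symmetry ($\sigma_A+\sigma_B=\sigma_{\mathrm{total}}$, and subadditive halving). No second-moment or dominant-agent analysis is needed.
\item On their intersection, scale the creators by $\alpha=\sigma_A/\sigma_{\mathrm{total}}\ge\frac12$. This restores budget balance because $\delta_A\le\delta_{\mathrm{total}}\le\sigma_{\mathrm{total}}$, and it retains value at least $\frac14 v(x^*|_D)\ge\frac15 v(x^*)$. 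Your $\lambda$ is at least this $\alpha$, so your repair works once the event is in hand.
\end{itemize}
The two MMS parts are disjoint, so their good events are independent. This gives $\mathbb{E}[\min\{\mathcal W(S_1),\mathcal W(S_2)\}]\ge c\cdot\mms$ for a constant $c$ (the paper states $1/40$, \cref{lem:random-split}). Combined with Step 2, this yields the theorem. One further adjustment: Step 2 needs $\kappa\le\min\{\mathcal W(S_1),\mathcal W(S_2)\}$ for the realized split. So $\kappa$ must be at most a constant fraction of $\mms$, not merely $\kappa\le\mms$.
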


To understand the need for the ladder-like structure of our
mechanism, it is useful to understand how simpler alternatives
might fail. For instance, suppose we know the optimal partition. Then, a naive approach might simply run a simultaneous competition between these two sets, asking them to generate the maximum value. However, if one set is inherently stronger (i.e., capable of generating significantly more value), the weaker set anticipates defeat and exits the market or bids passively. Without competition, the revenue generators within the stronger set could again become a monopoly and extract all the surplus, driving value to zero.

To overcome this, \ladderbafo\ employs an \emph{alternating ladder} structure. The mechanism operates in rounds, alternating between the two sets. In each round, the active set is asked to satisfy a value target $T$. The first-round target is small and increases multiplicatively in each iteration. By starting with small targets, the mechanism keeps the weaker set engaged and competing as long as the target is within their feasible range, preventing the stronger set from immediately monopolizing the market. Crucially, to incentivize the weaker set to remain active even though it knows that it will not be able to compete beyond a certain threshold, the mechanism utilizes \emph{probabilistic termination}. After every successful round, the mechanism terminates with probability $q$, finalizing the current allocation. This leaves  ``crumbs'' of opportunity: agents in the active set know that fulfilling the current target might lead to an immediate win. This effectively decouples the rounds and forces agents to maximize their utility locally.

\paragraph{Analysis Roadmap.}The proof of \cref{thm:constant-factor-approx-MMS} proceeds in two steps. First, we analyze the mechanism's performance on a \emph{fixed} partition $(S_1, S_2)$. We establish that in any SPE, the mechanism guarantees a value comparable to the maximum feasible value of the weaker set (i.e., the set with lower potential value) (\cref{thm:fixed_partition_extract}). Second, we show that a \emph{random} partition of the market is sufficiently balanced. Specifically, we prove in \cref{lem:random-split} that, in expectation, the weaker half of a random partition still contains enough potential value to approximate the MMS benchmark. Combining these two results yields the main theorem.

\paragraph{Extracting the minimum value of a fixed partition.}

First we show that for any fixed set partition of agents $(S_1,S_2)$ and
 for every $\varepsilon>0$, the principal can use \ladderbafo\
to extract value arbitrarily close to $\min\{\wel(S_1),\wel(S_2)\}$ in \emph{every} SPE.

\begin{lemma}\label{thm:fixed_partition_extract}
Fix an instance $\mathcal{I}=(v,\revenue,\cost)$ and a partition $(S_1,S_2)$ of the agents. Let
$
\wel_{\min} \;:=\; \min\left\{\wel(S_1),\wel(S_2)\right\}.
$
For every $\varepsilon,\delta\in(0,1)$ and every $\kappa\in(0,\wel_{\min})$, in every SPE the final allocation $\alloc^*$ constructed by the \ladderbafo~ satisfies 
\[
\Pr\!\left[
v(\alloc^*)
\;\ge\;
(1-\varepsilon)\,\wel_{\min}
\right]
\;\ge\;
1-\delta,
\]
where the probability is over the mechanism's internal randomization.
\end{lemma}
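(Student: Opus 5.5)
The plan is to reduce each round of \ladderbafo\ to a single application of the wish-come-true property (\cref{thm:implement}), and then finish with a deterministic ladder argument plus a union bound over the random stopping coins. Throughout, $\Phi_{G_t,T_t}(\alloc,\bids)$ denotes the conjunction of budget balance within $G_t$ at bids $\bids$ and $v(\alloc)\ge T_t$. As in the proof of \cref{thm:guessbafo}, this predicate is price monotonic, and $f_{G_t,T_t}$ is $\Phi_{G_t,T_t}$-consistent.

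\textbf{Step 1 (per-round wish-come-true inside the ladder).} I will show the following. Fix any SPE and any history reaching round $t$. If $T_t\le\wel(G_t)$, then round $t$ succeeds, i.e.\ $\alloc^t\neq 0$ and $v(\alloc^t)\ge T_t$.

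The proof repeats the backward induction of \cref{thm:implement} over the sequential bids within round $t$. The only change is that an agent's payoff is now her \emph{continuation} utility in the whole ladder, not just her payoff from round $t$. The two facts that induction needs still hold:
\begin{itemize}
\item[(a)] Bidding one's cost $c_k$ guarantees nonnegative continuation utility. Any output pays $b_ix_i$, so an agent who bid $c_k$ in round $t$ gets zero from that round's outcome if it is implemented. She is not allocated in the other group's rounds. In later rounds of her own group, the SPE continuation guarantees her nonnegative utility. As a consequence, the analogue of \eqref{eq:nonneg} holds.
\item[(b)] If round $t$ fails, the mechanism outputs the incumbent from round $t-1$. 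That incumbent uses only agents of $G_{t-1}$, which is disjoint from $G_t$, so every agent of $G_t$ gets zero utility with zero participation.
\end{itemize}
If instead round $t$ succeeds with $x_k>0$, the mechanism stops immediately with probability $q>0$. The agent therefore gets positive participation with positive probability, at nonnegative utility. Under the stated tie-breaking (preferring participation at zero utility), this is strictly better than failure. The same contradiction as in Cases 1 and 2 of \cref{thm:implement} then forces feasibility at each level.

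I expect this step to be the main obstacle. The delicate points are making the ``strictly worse'' comparison rigorous when continuation utilities may be positive, and handling the fact that an agent's later rounds of her own group can change her incentives. My first attempt is to note that any bid giving positive continuation utility already implies a successful round, so only zero-utility comparisons rely on tie-breaking.

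\textbf{Step 2 (the ladder).} Two boundary facts set up the ladder. First, $\kappa<\wel_{\min}\le\wel(G_1)$, so round 1 succeeds. Second, $L$ is chosen so that $T_L>\sumval\ge\opt\ge\wel(G_t)$, so some round must fail if the coins never stop the mechanism.

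Let $t^*$ be the first failing round along the realized path. By Step 1, $T_{t^*}>\wel(G_{t^*})\ge\wel_{\min}$, hence $T_{t^*-1}>\wel_{\min}/(1+\eta)$. In that case the output is the round-$(t^*-1)$ incumbent, whose value is at least $T_{t^*-1}$. Since $\eta=\varepsilon/2$, we have $1/(1+\eta)\ge 1-\varepsilon$, so this value is at least $(1-\varepsilon)\wel_{\min}$.

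Now consider a random stop at a successful round $t$. Its value is at least $T_t$. This can fall below $(1-\varepsilon)\wel_{\min}$ only if $T_t<(1-\varepsilon)\wel_{\min}$.

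\textbf{Step 3 (union bound).} The bad event requires a coin to stop the mechanism at one of the at most $L$ rounds with $T_t<(1-\varepsilon)\wel_{\min}$. Each such coin stops with probability $q=\delta/L$, so the bad event has probability at most $L\cdot q=\delta$. Outside the bad event, every possible termination yields $v(\alloc^*)\ge(1-\varepsilon)\wel_{\min}$, which gives the lemma.
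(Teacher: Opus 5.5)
Your Steps 2--3 match the paper's proof. The first failed round $\tau$ has $T_\tau>\wel(G_\tau)\ge\wel_{\min}$, so the implemented incumbent has value at least $T_{\tau-1}>\wel_{\min}/(1+\eta)\ge(1-\varepsilon)\wel_{\min}$. A coin stops the ladder earlier with probability at most $qL=\delta$; restricting the bad event to rounds with $T_t<(1-\varepsilon)\wel_{\min}$ is a harmless sharpening. Your Step 1 is exactly the paper's \cref{lem:round_success_fixed_partition}, which the paper states separately and simply invokes in this proof. Citing it closes your argument. Your own derivation of it has a gap, at the point you suspected.

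The gap is the sentence ``As a consequence, the analogue of \eqref{eq:nonneg} holds.'' Item (a) only shows that each agent's SPE continuation utility is nonnegative. It does not show that a round-$t$ winner bids at least her cost. In the ladder, a winner who bids below $c_i$ loses money only on the event that the round-$t$ outcome is implemented. That loss can be outweighed by profits in later rounds of her own group, and the continuation equilibrium may even condition on her round-$t$ bid.

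In \cref{thm:implement}, \eqref{eq:nonneg} is what yields $\alloc^c_k>0$ after agent $k$ deviates to $c_k$: lowering future winners' bids to their costs keeps $\Phi$ true at $\hat{\mathbf p}(s)$. In turn, $\alloc^c_k>0$ is what makes that deviation strictly better than failure under the tie-breaking rule. Without it, later bidders in round $t$ could steer to a feasible allocation that excludes $k$ by bidding below cost. The deviation could then leave $k$ with zero utility and zero participation, which ties with failure, and the Case~2 contradiction disappears.

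The other use of \eqref{eq:nonneg} in Case~2 is avoidable. If the equilibrium bid led to terminal infeasibility, agent $k$ would get nothing, so you can argue about $\mathcal X_\Phi(\mathbf b(s'))$ directly. The paper does not route the per-round claim through \eqref{eq:nonneg} at all; it argues directly from the strict incentive created by $q$-termination (\cref{lem:round_incentive}).
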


We first formally define the threshold predicate we use in \cref{alg:alt_wish_bafo} and observe that it satisfies the price monotonicity property.

\begin{definition}[Threshold Predicate]\label{def:threshold-predicate}
    Fix a set $G\subseteq N$ and target $T\ge 0$. Define the predicate $\Phi_{G,T}(\alloc,\bids)$ as follows:
\[
\Phi_{G,T}(\alloc,\bids)=\text{True}
\quad\Longleftrightarrow\quad 
\Bigl(x_i = 0\ \forall i\notin G\Bigr)\ \land\
\Bigl(\sum_{i\in G} b_i x_i \le \sum_{i\in G} r_i x_i\Bigr)\ \land\
\Bigl(v(\alloc)\ge T\Bigr).
\]
\end{definition}

\begin{lemma}\label{lem:phi_threshold_monotone}
For every $G\subseteq N$ and $T\ge 0$, the predicate $\Phi_{G,T}$ is price monotonic~(\cref{def:monotone}).
\end{lemma}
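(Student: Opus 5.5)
The plan is to verify \cref{def:monotone} directly by checking each of the three conjuncts of $\Phi_{G,T}$ separately. Fix an allocation $\alloc$ and two bid vectors $\bids,\bids'$ with $\Phi_{G,T}(\alloc,\bids)=\text{True}$ and $b'_i\le b_i$ for every $i$ with $x_i>0$. We need to show that $\Phi_{G,T}(\alloc,\bids')=\text{True}$.

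The first conjunct, $x_i=0$ for all $i\notin G$, and the third conjunct, $v(\alloc)\ge T$, depend only on $\alloc$ (and on the fixed data $G$, $T$, $v$), not on the bids. They therefore carry over from $\bids$ to $\bids'$ unchanged.

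The only step needing care is the budget conjunct. We split the sum $\sum_{i\in G} b'_i x_i$ according to whether $x_i>0$ or $x_i=0$. Agents with $x_i=0$ contribute $0$ under both bid vectors, so the hypothesis of \cref{def:monotone} imposes nothing on them, and nothing is needed. For agents with $x_i>0$ we have $b'_i\le b_i$ and $x_i\ge 0$, so $b'_i x_i\le b_i x_i$. This gives
\[
\sum_{i\in G} b'_i x_i \;\le\; \sum_{i\in G} b_i x_i \;\le\; \sum_{i\in G} r_i x_i ,
\]
where the last inequality is the budget conjunct at $\bids$; the revenues $r_i$ are public constants independent of the bids.

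Combining the three conjuncts yields $\Phi_{G,T}(\alloc,\bids')=\text{True}$. There is no real obstacle here; the one subtlety is remembering that \cref{def:monotone} only constrains the bids of agents on the support of $\alloc$, and that off-support agents drop out of the budget sum.
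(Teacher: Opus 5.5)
Your proof is correct and matches the paper's: the value condition (and, as you add, the support condition $x_i=0$ for $i\notin G$) depends only on $\mathbf{x}$, and the budget inequality only becomes easier when the bids of agents with $x_i>0$ weakly decrease. The only difference is that you handle the support conjunct and the off-support agents explicitly, while the paper leaves them implicit.
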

\begin{proof}
Fix any allocation  $\alloc$  and bids $\mathbf{b},\mathbf{b}'$ such that $\Phi_{G,T}(\alloc,\mathbf{b})=\textnormal{True}$
and $b'_i\le b_i$ for all $i$ with $x_i>0$.
The constraint $v(x)\ge T$ depends only on $\alloc$.
The budget constraint $\sum_{i\in G} b_i x_i \le \sum_{i\in G} r_i x_i$ can only become easier when bids of winners decrease.
Therefore $\Phi_{G,T}(\alloc,\mathbf{b}')=\textnormal{True}$.
\end{proof}

Next, we show a result that holds in every round of  \cref{alg:alt_wish_bafo}: if $T_t \le \wel(G_t),$ then in every SPE (conditional on reaching round $t$), the output of round $t$ satisfies $\alloc^t\neq 0$ and $v(\alloc^t)\ge T_t.$ 
The main idea is that the game terminates in the current round with positive probability, and therefore agents that can support the wish $T_t$ have an incentive to do so. The formal proof is 
deferred to \cref{apx:mmslowerbound}.

\begin{lemma}\label{lem:round_success_fixed_partition}
Fix a round $t$ of \ladderbafo\ with active set $G_t$ and threshold $T_t$, and condition on any history that reaches the start of round $t$.
If $T_t \le \wel(G_t)$, then in every SPE, the BAFO call of round $t$ outputs an allocation $\alloc^t\neq \mathbf{0}$
with $v(\alloc^t)\ge T_t$.
\end{lemma}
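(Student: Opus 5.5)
The plan is to reduce the lemma to the argument behind \cref{thm:implement}, applied to the round-$t$ BAFO subgame with predicate $\Phi_{G_t,T_t}$. That predicate is price monotonic by \cref{lem:phi_threshold_monotone}, and the selection rule $f_{G_t,T_t}$ is $\Phi_{G_t,T_t}$-consistent by construction. The difficulty is that \cref{thm:implement} was proved for a one-shot BAFO, where an agent's continuation payoff is just $(b_i-c_i)x_i$ at the terminal node. In \ladderbafo\ the terminal node of round $t$ is not terminal for the whole game: after a successful round, play continues to round $t+1$ with probability $1-q$. So I would re-run the backward induction of \cref{thm:implement} over the agents of $G_t$ in their public order. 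The claim to carry is the same as \eqref{eq:goal}: if $\mathcal X_{\Phi}(\hat{\mathbf p}(s))\neq\emptyset$, then $\mathcal X_{\Phi}(\mathbf b(s))\neq\emptyset$. Here $\hat{\mathbf p}(s)$ uses past round-$t$ bids and true costs for the remaining agents of $G_t$. The premise at the root of round $t$ is exactly $T_t\le\wel(G_t)$, witnessed by the optimal allocation for $G_t$ at true costs, extended by zeros outside $G_t$.

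The key modification is the comparison of payoffs in Case 2 and in the base case. I would first record two facts about agent $k\in G_t$ at a node of round $t$. First, if round $t$ fails, the game stops and outputs the previous incumbent, which was chosen in round $t-1$ from the other group $G_{t-1}$, or $\mathbf 0$ if $t=1$. Agent $k\in G_t$ therefore gets allocation $0$ and utility $0$ from that outcome. Second, if round $t$ succeeds with $x^t_k>0$ and agent $k$ bid $b_k\ge c_k$, then agent $k$ receives:
\begin{itemize}
\item with probability $q$: utility $(b_k-c_k)x^t_k\ge 0$ together with positive participation;
\item with probability $1-q$: a continuation value from later rounds.
\end{itemize}
That continuation value is nonnegative in any SPE. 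The reason is that at every later node where $k$ moves, $k$ can bid $c_k$ and secure zero utility, and an agent receives payment only through an incumbent whose bid $k$ chose. So "bid $c_k$" yields expected utility at least $0$ with probability $q>0$ of being finalized with positive allocation. Under the paper's participation tie-breaking, which I would extend to expected terms, this strictly beats the failure outcome of zero utility and zero allocation. With this in hand, the contradiction steps of the original proof go through verbatim. The claim that the deviation $b_k=c_k$ leads to $x^c_k>0$ uses \eqref{eq:nonneg}, lowering future winners' bids to cost, and price monotonicity. Showing that the equilibrium bid cannot make $\mathcal X_\Phi(\hat{\mathbf p}(s'))$ empty uses the same argument.

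The main obstacle is re-establishing \eqref{eq:nonneg}, namely that future winners in $G_t$ bid at least their cost. In the multi-round game an agent might rationally underbid in round $t$, accepting a loss there, to steer the later rounds. I would handle this by comparing the deviation to bidding $c_k$, as in the original argument. Suppose a winner bids $b_k<c_k$ with $x^t_k>0$. Then with probability $q$ the agent takes a strict loss, and later rounds cannot compensate, because every later payoff of $k$ is bounded by what $k$ secures by bidding at least cost in its own rounds. This step may need care and possibly a coupling of continuation play. If it proves delicate, the fallback is to observe that the induction only needs price monotonicity applied to winners whose bids weakly exceed cost, and to argue directly that a sub-cost winning bid is dominated. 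Concluding at the root of round $t$, the terminal set is nonempty, so consistency gives $\alloc^t\neq\mathbf 0$ with $\Phi_{G_t,T_t}(\alloc^t,b^t)$ true, hence $v(\alloc^t)\ge T_t$.
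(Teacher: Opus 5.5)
Your route is the paper's: the paper also proves this lemma by rerunning the backward induction of \cref{thm:implement} over the bidders of $G_t$. It uses the $q$-stop (via \cref{lem:round_incentive}) to make feasibility-preserving play strictly better, together with the fact that a failed round implements an incumbent supported on $G_{t-1}$, which pays every agent of $G_t$ nothing. Your observation that the $(1-q)$ continuation is nonnegative whenever $k$'s own round-$t$ bid is at least $c_k$ is correct, and it is more careful than the paper, whose \cref{lem:round_incentive} simply drops that term.

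The gap is the step you flag yourself: \eqref{eq:nonneg} for the later bidders of round $t$. Your justification (``every later payoff of $k$ is bounded by what $k$ secures by bidding at least cost'') turns a floor into a ceiling. Bidding at least cost guarantees a continuation payoff of at least $0$, but it puts no upper bound on what a sub-cost bid can lead to. In \ladderbafo\ there is a concrete reason a sub-cost winning bid can be a best response. Success of round $t$ is the only gateway for agents of $G_t$ to rounds $t+2,t+4,\ldots$, where they may earn strictly positive margins. The loss from a sub-cost bid, by contrast, materializes only if the mechanism stops right after round $t$ (probability $q$) or if round $t+1$ fails. So at a node where earlier round-$t$ bids have made $\mathcal X_\Phi(\hat{\mathbf p})$ empty by a small margin, a later agent may rationally rescue the round with a slightly sub-cost bid. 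In addition, SPE continuation play from round $t+1$ on may condition on round-$t$ bids, so sub-cost bids can also be rewarded directly.

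This breaks exactly the step where you use \eqref{eq:nonneg}. In Case 2 you need $x^c_k>0$ after the deviation $b_k=c_k$, and you obtain it by lowering the bids of later winners on the continuation from $s^c$ to their costs. Suppose some later winner there bids below cost, for example rescuing the round after an intermediate bidder overbids in anticipation of that rescue. Then the round can succeed with $x^c_k=0$, and the deviation can leave $k$ with zero utility and no participation, exactly as a feasibility-destroying bid does. The tie-breaking then yields no strict preference, and the contradiction disappears.

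Your second use of \eqref{eq:nonneg}, showing $\mathcal X_\Phi(\hat{\mathbf p}(s'))\neq\emptyset$, is dispensable. If $\mathcal X_\Phi(\mathbf b(s'))\neq\emptyset$ you are done, since $\mathbf b(s')=\mathbf b(s)$; otherwise you only need the comparison with the deviation. The base case and Case 1 go through as you describe.

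The paper's proof does not resolve this point either. Its case (b), ``bidding $c_i$ preserves feasibility,'' tacitly treats the round as a one-shot BAFO in which later bidders never go below cost. So your proposal matches the paper's argument, but as a proof it still needs one of two things. Either give a genuine argument that no later winner bids below cost on continuations where $\mathcal X_\Phi(\hat{\mathbf p})\neq\emptyset$, or state an explicit assumption that rules this out.
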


\begin{proof}[Proof Sketch]
    Recall that the round-$t$ outcome is \emph{implemented with positive probability $q$}, so agents have a strict incentive not to destroy feasibility. If the wish fails (i.e., $\alloc^t=\mathbf{0}$), \cref{alg:alt_wish_bafo} stops immediately and implements the previous incumbent, so no agent in $G_t$
can benefit (in round $t$) from forcing failure.
If the wish succeeds, the mechanism terminates and implements this outcome $(\alloc^t,b^t)$ with probability $q>0$.
Therefore, whenever the target is feasible under true costs (equivalently $T_t\le \wel(G_t)$), any agent who can profit from a feasible round-$t$ outcome can get strictly more utility
by keeping the wish feasible, since doing so yields strictly positive expected utility from the $q$ termination event; sellers who cannot profit cannot sabotage feasibility either, because,
by price monotonicity, feasibility can be witnessed by an allocation that does not rely on them (or relies on them only when they bid $c_i$).
\end{proof}

We are now ready  to prove \cref{thm:fixed_partition_extract}.

\begin{proof}[Proof of \cref{thm:fixed_partition_extract}]
Fix $\varepsilon,\delta\in(0,1)$ and run \cref{alg:alt_wish_bafo} with
$\eta=\varepsilon/2$, starting wish $\kappa\in(0,\wel_{\min})$,
horizon $L=\lceil \log_{1+\eta}(\sumval/\kappa)\rceil+2$, and stopping probability $q=\delta/L$.
If $\wel_{\min}=0$, the claim is trivial.

Fix any round $t$ and condition on reaching its start.
If $T_t\le \wel(G_t)$, then by Lemma~\ref{lem:round_success_fixed_partition} the round-$t$ BAFO call outputs $x^t\neq 0$ and $v(x^t)\ge T_t$
in every SPE of the continuation game.

Consider the event that the mechanism does not stop due to the coin toss before the first failed wish.
Let $\tau$ be the first round in which the wish fails, i.e., $x^\tau=0$.
By the preceding paragraph, failure implies $T_\tau>\wel(G_\tau)\ge \wel_{\min}$.
Therefore $T_{\tau-1}=T_\tau/(1+\eta)>\wel_{\min}/(1+\eta)$.
When round $\tau$ fails, the mechanism implements the incumbent from round $\tau-1$, whose value is at least $T_{\tau-1}$.
Hence on this event,
\[
v(\alloc^*) \;\ge\; \frac{\wel_{\min}}{1+\eta}.
\]

The stopping coin is tossed only after successful rounds, and there are at most $L$ rounds.
Thus, under any strategy profile (in particular any SPE),
\[
\Pr[\text{stop before the first failed wish}]
\;\le\; \sum_{t=1}^L q \;=\; qL \;=\; \delta.
\]
So with probability at least $1-\delta$ we reach the first failed wish and the bound
$v(\alloc^*) \geq \nicefrac{\wel_{\min}}{(1+\eta)}$
holds.
Since $\eta=\varepsilon/2$, we have $(1+\eta)^{-1}\ge 1-\varepsilon$, and therefore
\[
\Pr\!\left[v(\alloc^*) \ge (1-\varepsilon)\,\wel_{\min}\right] \;\ge\; 1-\delta. \qedhere
\]
\end{proof}

\paragraph{Constant Approximation of the Worst-Group.} 
The last ingredient we need for the proof 
of \cref{thm:constant-factor-approx-MMS}
is to show that
when we draw a partition uniformly at random from the set of 
all partitions, then, in expectation, the value of
the worst
of the two sets approximates
\mms\ within a constant factor. The proof 
relies on a careful accounting argument.

\begin{lemma}\label{lem:random-split}
 For any monotone, fractional subadditive valuation function $v$, the expected value achieved by the worst of the two sets
of a randomly drawn partition satisfies
\[
    \mathbb{E}_{(P_1, P_2)} \left[ \min(\mathcal{W}(P_1), \mathcal{W}(P_2)) \right] \ge \frac{1}{40} MMS \,,
\]
where $(P_1, P_2)$ is drawn uniformly at random from the set of all partitions.
\end{lemma}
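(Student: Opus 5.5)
Fix an optimal MMS partition $(P_1^*,P_2^*)$, so that $\wel(P_1^*)\ge \mms$ and $\wel(P_2^*)\ge\mms$. Let $\alloc^{(1)},\alloc^{(2)}$ be witnessing allocations: $\alloc^{(j)}$ is supported on $P_j^*$, is budget feasible at the true costs, and has $v(\alloc^{(j)})\ge\mms$. A uniformly random partition $(P_1,P_2)$ places each agent independently in $P_1$ or $P_2$ with probability $1/2$. The plan has three steps.

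\textbf{Step 1: reduce to one side.} Since $\min\{\wel(P_1),\wel(P_2)\}\ge \min\{\wel(P_1\cap\cdot),\ldots\}$, it suffices to show the following: with constant probability, \emph{both} $P_1$ and $P_2$ contain a budget-feasible sub-allocation of value $\Omega(\mms)$. By the symmetry $P_1\leftrightarrow P_2$ and a union bound, it is enough to show that $\Pr[\wel(P_1)\ge \mms/c]\ge 1-\gamma$ for some $\gamma<1/2$. The factor $1/40$ would then come from the constants $c$ and $\gamma$.

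\textbf{Step 2: restrict a witness to a random half.} Take $\alloc=\alloc^{(1)}$ and restrict it to $P_1$, i.e.\ set $\alloc_{P_1}=\alloc\cdot\mathbf 1_{P_1}$. Write the funders in the support as $F=\{i:r_i\ge c_i\}$ and the creators as $C=\{i: r_i<c_i\}$. Budget feasibility says $\sum_C x_i(c_i-r_i)\le\sum_F x_i(r_i-c_i)$. The restriction may be infeasible, so we repair it by scaling the creators. Keep all funders in $F\cap P_1$ and scale the creators in $C\cap P_1$ by
\[
\lambda=\min\Big\{1,\frac{\sum_{F\cap P_1}x_i(r_i-c_i)}{\sum_{C\cap P_1}x_i(c_i-r_i)}\Big\},
\]
which is allowed because effort is fractional and $v$ is monotone.

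For the value, fractional subadditivity (XOS) gives a linear clause $a\ge 0$ with $\sum_i a_ix_i=v(\alloc)$ and $v(\mathbf y)\ge \sum a_iy_i$ for all $\mathbf y\le\alloc$. The repaired allocation therefore has value at least $\lambda\sum_{P_1}a_ix_i$.

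\textbf{Step 3: concentration — the main obstacle.} We need both $\sum_{P_1}a_ix_i$ and the funder-slack ratio to be constant with good probability. In expectation each is half, but a single agent may dominate the value or the slack, and then a random half fails. The fix is to exploit the freedom in choosing the witnesses.

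If one agent carries a constant fraction of the slack or of the value, we use the two disjoint witnesses $\alloc^{(1)}$ and $\alloc^{(2)}$. With probability at least $1/4$, the dominant funder of witness 1 lands in $P_1$ and that of witness 2 lands in $P_2$; since these are distinct agents, the events are independent. I would case-split:
\begin{itemize}
\item[(a)] \emph{Light case.} Every single agent carries at most a small fraction of the slack and of the value. Then Chebyshev/Paley–Zygmund gives that the $P_1$-halves of both slack and value are at least $1/4$ of the total with probability at least $0.9$, and also that the creators' deficit in $P_1$ is at most $3/4$ of the total. This yields $\lambda\ge 1/3$ and value at least $\mms/12$.
\item[(b)] \emph{Heavy case.} One agent is heavy. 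Condition on the heavy funders or value agents being routed correctly, and apply case (a) to the remainder.
\end{itemize}
Combining the two witnesses (witness $j$ serves side $j$ or the swapped side, by symmetry) gives $\mathbb E[\min]\ge \Pr[\text{good}]\cdot \mms/12$. Tuning the thresholds should give a constant of about $1/40$.

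I expect the bookkeeping in case (b) to be the hardest part: a heavy creator paired with a light-funder pool, or a mixture of both. I would first try to handle it by treating the heavy agent as a single fixed coin flip and rerunning the concentration argument on the rest.
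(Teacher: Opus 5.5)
\textbf{Step~1 fails, and case (b) is never carried out.} The reduction in Step~1 asks for something that is false in general. Take linear $v$ and two witnesses. The first is a funder $f_1$ ($r=1$, $c=0$) with a creator $d_1$ ($r=0$, $c=1$, value $V$). The second is a funder $f_2$ ($r=M$, $c=0$) with a creator $d_2$ ($r=0$, $c=M$, value $V$). Then $\mms=V$. For any constant $c<M$, $\wel(P_1)\ge V/c$ holds if and only if $P_1$ contains $d_1$ and some funder, or contains both $f_2$ and $d_2$. This event has probability exactly $3/8+1/8=1/2$. So no $\gamma<1/2$ exists and the union bound gives nothing, even though the lemma holds here (routing $\{f_1,d_1\}$ and $\{f_2,d_2\}$ to opposite sides has probability $1/8$).

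Your Step~3 quietly replaces Step~1 with exactly this routing argument: witness $j$ is evaluated on one side, and the two events are independent because the witnesses have disjoint supports. That is the paper's framework. It then needs, for a \emph{single} witness, a universal $p>0$ such that the restriction of the witness to a random half can be repaired to value $\Omega(\mms)$ with probability at least $p$. You establish this only in the light case. As sketched, handling case (b) by conditioning on the placement of every agent above your lightness threshold $\epsilon$ costs a factor $2^{-\Theta(1/\epsilon)}$, with $\epsilon$ small enough for Chebyshev. So a constant near $1/40$ does not follow.

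\textbf{The missing idea is that no concentration is needed.} Write $\sigma,\delta$ for the funders' slack and the creators' deficit from your Step~2. Slack is additive and $v$ is subadditive, so exchangeability of $(P_1,P_2)$ alone gives $\Pr[\sigma_{P_1}\ge\sigma/2]\ge 1/2$ and $\Pr[v(\alloc|_{P_1\cap C})\ge \frac12 v(\alloc|_C)]\ge 1/2$, however heavy individual agents are. The deficit needs no control either: $\delta_{P_1}\le\delta\le\sigma$ always. Hence scaling the creators by $\sigma_{P_1}/\sigma\ge 1/2$ is budget-feasible on the slack event, and $v(\alloc)\le 2v(\alloc/2)$ loses only another factor $2$.

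The remaining issue is that funders can carry both slack and value, so the two events are correlated. The paper splits $v(\alloc)\le v(\alloc|_F)+v(\alloc|_C)$:
\begin{itemize}
\item If funder-held value is at least $v(\alloc)/5$, the funders in $P_1$ are self-funded and capture half of it with probability $1/2$, giving value at least $v(\alloc)/10$.
\item Otherwise, the slack event depends only on funders and the value event only on creators, so they are independent. This gives probability $1/4$ and value at least $v(\alloc)/5$.
\end{itemize}
The three configurations of the two witnesses then yield $1/40$. Alternatively, both events are increasing in $P_1$, so Harris's inequality would give a constant without the case split.

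Finally, your linear-clause step relies on XOS structure, whereas the paper's argument uses only monotonicity and subadditivity.
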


\begin{proof}
Let $(S_1^*, S_2^*)$ be a fixed partition achieving the MMS benchmark. Without loss, we analyze the performance on a single witness set $S^* \in \{S_1^*, S_2^*\}$ by conditioning on the structure of its optimal budget-balanced allocation vector, $\mathbf{x}^*$. 

We decompose the value of the optimal allocation $v(\mathbf{x}^*)$ based on the agent types in its support. Let $C$ be the set of funders ($r_i \ge c_i$) and $D$ be the set of creators ($r_i < c_i$) utilized in $\mathbf{x}^*$. By subadditivity,
\[
v(\mathbf{x}^*)
=
v\!\left(\mathbf{x}^*|_C+\mathbf{x}^*|_D\right)
\leq
v(\mathbf{x}^*|_C)+v(\mathbf{x}^*|_D),
\]

where $v(\mathbf{x}^*|S)$ denotes the value of the allocation restricted to agents in set $S$. Based on this decomposition, one of the following two cases must hold:

\textbf{Case 1:  ($v(\mathbf{x}^*|_C) \ge \frac{1}{5} v(\mathbf{x}^*)$).} For every realized partition $(A,B)$, subadditivity gives
$
v(\mathbf{x}^*|_C)
\leq
v(\mathbf{x}^*|_{A\cap C})
+
v(\mathbf{x}^*|_{B\cap C}).
$
By
symmetry, the event
$
v(\mathbf{x}^*|_{A\cap C})
\geq
\frac{1}{2}v(\mathbf{x}^*|_C)
$
occurs with probability at least $1/2$. Since funders can be self-funded, this allocation is budget-feasible.$$ \mathcal{W}(A) \ge \frac{1}{2} v(\mathbf{x}^*|_C) \ge \frac{1}{10} v(\mathbf{x}^*). $$

\textbf{Case 2: ($v(\mathbf{x}^*|_D) \ge \frac{4}{5} v(\mathbf{x}^*)$).} 
In this scenario, we establish a lower bound by constructing a specific candidate allocation for the random set $A$. Let $\sigma_{\text{total}} = \sum_{i \in C} (r_i - c_i)x_i^*$ be the total budget surplus generated by the optimal allocation on the funders. Let $\sigma_A = \sum_{i \in A \cap C} (r_i - c_i)x_i^*$ be the portion of this surplus captured by set $A$. 
We define two independent events for the random set $A$:
\begin{enumerate}
\item Surplus Capture: $A$ captures at least half the total surplus ($\sigma_A \ge \frac{1}{2} \sigma_{\text{total}}$). (Prob $\ge 1/2$).
\item Value Capture: $A$ captures at least half the value of the creator set ($\mathbf{x}^*|_{A \cap D} \ge \frac{1}{2} v(\mathbf{x}^*|_D)$). To see that this event occurs with probability at least $1/2$,
    observe that subadditivity gives
    $
    v(\mathbf{x}^*|_D)
    \leq
    v(\mathbf{x}^*|_{A\cap D})
    +
    v(\mathbf{x}^*|_{B\cap D}).
    $
    Thus, at least one of $A$ or $B$ captures at least half of the
    creator value, and the claim follows by symmetry.\end{enumerate}

    Since the surplus is derived from agents in $C$ and the value is derived from agents in
$D$, and these sets are disjoint, the two events above are independent. Hence,
their intersection occurs with probability at least $1/4$.

On the intersection of these events, let
$
\alpha
=
\frac{\sigma_A}{\sigma_{\text{total}}}.
$
Notice that $\alpha\geq 1/2$. We consider the allocation
\[
\mathbf{y}
=
\mathbf{x}^*|_{A\cap C}
+
\alpha\mathbf{x}^*|_{A \cap D}.
\]

We first verify that $\mathbf{y}$ is budget-feasible for $A$. Let
$
\delta_{\text{total}}
=
\sum_{i\in D}(c_i-r_i)x_i^*
$ and 
$\delta_A
=
\sum_{i\in A\cap D}(c_i-r_i)x_i^*.
$
Since $\mathbf{x}^*$ is budget-feasible,
$
\delta_{\text{total}}
\leq
\sigma_{\text{total}}.
$
Moreover, in Case 2 we have
$v(\mathbf{x}^*|_D)>0$. Since $v(\mathbf{0})=0$, this implies that
$\delta_{\text{total}}>0$, and therefore
$\sigma_{\text{total}}>0$.

The net revenue generated by $\mathbf{y}$ is
\[
\sum_i(r_i-c_i)y_i
=
\sigma_A-\alpha\delta_A
\geq
\sigma_A-\alpha\delta_{\text{total}}
\geq
\sigma_A-\alpha\sigma_{\text{total}}
=0.
\]
Thus, $\mathbf{y}$ is budget-feasible for $A$.

By monotonicity,
$
\wel(A)
\geq
v(\mathbf{y})
\geq
v(\alpha\mathbf{x}^*|_{A \cap D}).
$
Since $v$ is subadditive,
\[
v(\mathbf{x}^*|_{A \cap D})
=
v\left(
\frac{1}{2}\mathbf{x}^*|_{A \cap D}
+
\frac{1}{2}\mathbf{x}^*|_{A \cap D}
\right)
\leq
2v\left(\frac{1}{2}\mathbf{x}^*|_{A \cap D}\right). \qquad 
\Rightarrow \qquad v\left(\frac{1}{2}\mathbf{x}^*|_{A \cap D}\right)
\geq
\frac{1}{2}v(\mathbf{x}^*|_{A \cap D}).
\]

Since $\alpha\geq1/2$, monotonicity further implies
\[
v(\alpha\mathbf{x}^*|_{A \cap D})
\geq
v\left(\frac{1}{2}\mathbf{x}^*|_{A \cap D}\right)
\geq
\frac{1}{2}v(\mathbf{x}^*|_{A \cap D}).
\]
Substituting the bound from the Value Capture event gives
\[
\wel(A)
\geq
\frac{1}{2}v(\mathbf{x}^*|_{A \cap D})
\geq
\frac{1}{2}\cdot\frac{1}{2}
v(\mathbf{x}^*|_D)
=
\frac{1}{4}v(\mathbf{x}^*|_D).
\]
Finally, using the case assumption that creators hold the majority of
the value,
$
v(\mathbf{x}^*|_D)
\geq
\frac{4}{5}v(\mathbf{x}^*),
$
we have
\[
\wel(A)
\geq
\frac{1}{4}\cdot\frac{4}{5}v(\mathbf{x}^*)
=
\frac{1}{5}v(\mathbf{x}^*).
\]

We determine the expected minimum value $\mathbb{E}[\min(\wel(A), \wel(B))]$ by analyzing the three possible configurations of the witness sets $(S_1^*, S_2^*)$.
\begin{itemize}\item \textbf{Both Case 1.} For success, $A$ must approximate one set and $B$ the other. Since Case 1 has success probability $1/2$, the probability of a valid assignment is $1/2$. The value is limited by the Case 1 guarantee ($1/10$).$$\mathbb{E}[\min(\wel(A), \wel(B))] \ge \frac{1}{2} \cdot \frac{1}{2} \cdot 2 \times \frac{1}{10} \text{MMS} = \frac{1}{20} \text{MMS}.$$\item \textbf{One Case 1, One Case 2.}
Assume $S_1^*$ is Case 1 (Funder) and $S_2^*$ is Case 2 (creator).
A "Global Win" requires $A$ to succeed on $S_1^*$ ($P=1/2$) AND $B$ to succeed on $S_2^*$ ($P=1/4$), or vice-versa.
\begin{itemize}
    \item $\Pr[A \text{ wins } S_1^*, B \text{ wins }S_2^*] = 1/2 \times 1/4 = 1/8.$
    \item $\Pr[B \text{ wins } S_1^*, A \text{ wins }S_2^*] = 1/2 \times 1/4 = 1/8$.
    \item Total Probability = $1/4$.
\end{itemize}
The value is the minimum of the two guarantees: $\min(\frac{1}{10}, \frac{1}{5}) = \frac{1}{10}$.
$$\mathbb{E}[\min] \ge \frac{1}{4} \times \frac{1}{10} \text{MMS} = \frac{1}{40} \text{MMS}.$$

\item \textbf{Both Case 2.}
Both sets require the stricter conditions of Case 2 ($P=1/4$).
\begin{itemize}
    \item $\Pr[ A \text{ wins } S_1^*, B \text{ wins } S_2^*] = 1/4 \times 1/4 = 1/16$.
    \item $\Pr[B \text{ wins } S_1^*, A \text{ wins } S_2^*] = 1/4 \times 1/4 = 1/16$.
    \item Total Probability = $1/8$.
\end{itemize}
The value is limited by the Case 2 guarantee ($1/5$).
$$\mathbb{E}[\min] \ge \frac{1}{8} \times \frac{1}{5} \text{MMS} = \frac{1}{40} \text{MMS}.$$
\end{itemize}In all scenarios, the expected minimum value is at least $\frac{1}{40} \text{MMS}$.\end{proof}

We are now ready to prove the guarantee of the \ladderbafo\ mechanism.

\begin{proof}[Proof of \cref{thm:constant-factor-approx-MMS}]
Draw the partition $(S_1,S_2)$ uniformly at random and run \ladderbafo\ on $(S_1,S_2)$.
For any fixed partition, \cref{thm:fixed_partition_extract} implies that in every SPE the output satisfies
$v(\alloc^*) \ge (1-\varepsilon)\,\min\{\wel(S_1),\wel(S_2)\}$ with probability at least $1-\delta$.
Taking expectation over the mechanism's randomness and then over the random partition, we get
\[
\mathbb{E}\!\left[v(\alloc^*)\right]
\;\ge\;
(1-\varepsilon)(1-\delta)\cdot
\mathbb{E}_{(S_1,S_2)}\!\left[\min\{\wel(S_1),\wel(S_2)\}\right].
\]
By Lemma~\ref{lem:random-split}, the final expectation is at least $\mms/40$, yielding
$\mathbb{E}[v(\alloc^*)]\ge (1-\varepsilon)(1-\delta)\,\mms/40$.
Choosing constant $\varepsilon,\delta$ gives the claimed constant-factor approximation.
\end{proof}

%% file: conclusion.tex
\section{Conclusion}\label{sec:conclusion}
In this work, we introduce the problem of designing mechanisms that incentivize strategic agents to form self-funded marketplaces. This problem generalizes the well-studied budget-feasible mechanism design problem, where the requirement is that total payments are bounded by a publicly known budget $B$; in our setting, the ``budget'' is endogenous and can vary depending on the selected set of agents $S$.

To evaluate the performance of our mechanisms, we first consider the first-best benchmark, $\mathrm{OPT}$. We show that truthful mechanisms cannot guarantee any bounded approximation, even when the value of $\mathrm{OPT}$ is known in advance. Furthermore, we show that even in a restricted setting where all agent costs are public except one, and the designer knows that $\mathrm{OPT}\in[1,\mathcal V]$, no mechanism can achieve an approximation ratio better than $\Omega(\log \mathcal V)$ approximation in any equilibrium. We complement these results by proposing a sequential auction that guarantees a $O(\log \mathcal V)$ approximation in \emph{every} subgame-perfect equilibrium if the designer knows that $\mathrm{OPT}$ lies within a range $[1,\mathcal V]$.
We then introduce the Maximin Share (MMS) benchmark, adapted from a well-studied notion in fair division. 
We prove that the approximation ratio of any truthful mechanism relative to MMS is $\Omega(\log n)$, and we design a sequential auction that achieves a constant-factor approximation to MMS in every subgame-perfect equilibrium. 

It is worth noting that while our general model allows for fractional efforts, the majority of our guarantees---including the $O(\log \mathcal{V})$ approximation to $\mathrm{OPT}$ (\cref{thm:guessbafo}) and the ability of BAFO to extract the optimal value of the weaker group in a fixed partition (\cref{thm:fixed_partition_extract})---hold even if allocations are restricted to be integral. The constant-factor approximation to the MMS benchmark, however, currently relies on fractional allocations to bound the expected value of a random partition.

Beyond these approximation guarantees, our results highlight two concepts of broader interest for mechanism design. First, the MMS benchmark provides a robust measure of market competitiveness. By filtering out monopolistic instances and remaining stable in competitive environments, MMS could serve as a valuable benchmark for other prior-free settings where efficiency hinges on competition. Second, our positive results utilize the sequential Best and Final Offer (BAFO) protocol, an auction format widely used in practice but understudied in theory. We show that sequential BAFO acts as a decentralized constraint solver: if a price-monotonic condition is feasible at true costs, sequential BAFO guarantees it in every subgame-perfect equilibrium, avoiding the coordination failures inherent to simultaneous bidding.

This framework also leaves several open questions. One direction is to study
learning dynamics and whether natural adjustment processes converge to the
equilibria analyzed in this paper. Another is to understand what additional
information, such as prior distributions, or
partial information about agents' costs, allows the principal to obtain
stronger guarantees.
\newpage

%% file: apx_bafo.tex
\section{Omitted Details from \cref{sec:bafo}}\label{apx:bafo}

\paragraph{Proof of \cref{thm:learning_bafo}}
\begin{proof}
The proof proceeds in two steps: establishing the feasibility of the target threshold and then applying the wish-come-true property. 

The mechanism sets the target welfare to $T = \hat{\opt} / \bar{\eta}$.
By the definition of the prediction error, we know that $\hat{\opt} \le \bar{\eta} \cdot \opt$.
Dividing by $\bar{\eta}$, we obtain:
\[ T = \frac{\hat{\opt}}{\bar{\eta}} \le \frac{\bar{\eta} \cdot \opt}{\bar{\eta}} = \opt. \]
Thus, the target threshold $T$ is feasible; there exists at least one allocation (specifically, the optimal budget-balanced allocation $\mathbf{x}^*$) that satisfies both the budget constraint and the condition $v(\mathbf{x}^*) \ge T$ under the agents' true costs $\mathbf{c}$.

Let $\Phi(\mathbf{x}, \mathbf{b})$ be the predicate required by the mechanism:
\[ \Phi(\mathbf{x}, \mathbf{b}) \iff \left(\sum x_i b_i \le \sum x_i r_i\right) \land \left(v(\mathbf{x}) \ge T\right). \]
We verify the two conditions required for Theorem~\ref{thm:implement}:
\begin{itemize}
    \item \textbf{price monotonicity:} The budget constraint $\sum x_i b_i \le \sum x_i r_i$ is clearly price monotonic (if it holds for $\mathbf{b}$, it holds for any $\mathbf{b}' \le \mathbf{b}$). The value constraint $v(\mathbf{x}) \ge T$ depends only on the allocation, so it is unaffected by price changes. Thus, their conjunction $\Phi$ is Price Monotonic.
    \item \textbf{$\Phi$-consistency:} The mechanism's selection rule explicitly maximizes welfare over the set of allocations satisfying $\Phi$. If this set is non-empty, it selects a valid member; otherwise, it outputs $\mathbf{0}$. This satisfies the definition of a $\Phi$-Consistent rule.
\end{itemize}

Since $T \le \opt$, the premise of Theorem~\ref{thm:implement} is met (a valid allocation exists at true costs). Therefore, in every SPE, the agents coordinate to produce an allocation $\mathbf{x}$ such that $\Phi(\mathbf{x}, \mathbf{b})$ is true.

The validity of the outcome implies $v(\mathbf{x}) \ge T$. Substituting the definition of $T$:
\[ v(\mathbf{x}) \ge \frac{\hat{\opt}}{\bar{\eta}}. \]
Using the other side of the prediction error bound ($\hat{\opt} \ge \opt / \bar{\eta}$), we conclude:
\[ v(\mathbf{x}) \ge \frac{1}{\bar{\eta}} \left( \frac{\opt}{\bar{\eta}} \right) = \frac{\opt}{\bar{\eta}^2}. \qedhere \]
\end{proof}

%% file: apx_mms.tex
\section{Omitted Details from \cref{sec:mms}} \label{apx:mms}

\subsection{Comparison of prior-free benchmarks}\label{apx:benchmark}
In this section, we motivate our choice of the Maximin-Share (MMS) benchmark by contrasting it with two other standard benchmarks from the mechanism design literature: the \emph{Frugality} benchmark and the $\mathcal{F}^{(2)}$ benchmark from competitive auctions. We demonstrate why these alternatives are ill-suited for the budget-balanced procurement setting, where agent contributions are two-fold (revenue and value).

\paragraph{The Frugality Benchmark.}In the context of frugal mechanism design (e.g., path auctions), the benchmark is typically defined by the best solution of the system when the optimal solution is removed. Adapted to our welfare maximization setting, the Frugality benchmark corresponds to the maximum budget-balanced welfare achievable using only agents \emph{not} included in the optimal solution $S^*$. Formally:
$$ \text{Frugal}(\mathcal{I}) = \max_{\alloc: x_i = 0 \text{ for all } i \in S^*} \left\{ v(\alloc) \mid \sum_{i \in S} b_i \le \sum_{i \in S} r_i \right\}. $$
In the budget-balanced setting, this benchmark is frequently trivial (i.e., zero). The optimal solution $S^*$ fully allocates all the ``funders'' (agents with $r_i \geq c_i$). By definition, removing $S^*$ eliminates the source of budget surplus. The remaining agents, $N \setminus S^*$, lack the revenue to form a non-negative feasible budget-balanced set, making the benchmark zero.

\paragraph{The $\mathcal{F}^{(2)}$ Benchmark.}
The $\mathcal{F}^{(2)}$ benchmark, standard in digital goods and competitive auctions, is defined as the optimal welfare achievable using a single uniform price $p$, subject to the constraint that at least two agents are selected. Formally:
$$ \mathcal{F}^{(2)} = \max_{p \ge 0} \left\{ \sum_{i: b_i \le p} v_i \;\middle|\; |\{i : b_i \le p\}| \ge 2 \land \sum_{i: b_i \le p} p \le \sum_{i: b_i \le p} r_i \right\}. $$
While robust for homogeneous goods where agents differ only in valuation (or cost), $\mathcal{F}^{(2)}$ is a poor fit for our problem due to the heterogeneity of agent contributions. Agents in our setting contribute in two distinct dimensions: \emph{revenue generation} ($r_i$) and \emph{welfare generation} ($v_i$).\begin{enumerate}\item \textbf{Incompatibility with Uniform Pricing:} A single uniform price imposes a symmetric constraint on asymmetric roles. To maximize efficiency, funders should ideally be paid close to their cost (to maximize the subsidy available for others), while creators should be paid as much as the surplus allows (to ensure their participation). A uniform price cannot simultaneously minimize the cost of revenue generation and maximize the payout to value creators.\item \textbf{Type Uncertainty:} One might argue that we can impose ``uniformity'' on two different dimensions, both value and revenue, depending on the contribution type of a given agent. However, without prior information, the mechanism cannot distinguish which ``type'' (revenue-heavy vs. value-heavy) an agent belongs to. 
\end{enumerate}
Consequently, $\mathcal{F}^{(2)}$ often serves as an arbitrarily loose lower bound that does not accurately reflect the achievable welfare in a budget-balanced environment.

\subsection{Missing Proofs from \cref{sec:mms}}\label{apx:mmslowerbound}

\begin{lemma}\label{lem:round_incentive}
Fix a round $t$ and condition on reaching it.
Consider any seller $i\in G_t$ at some history within the round-$t$ BAFO call.
If there exists a bid $b_i>c_i$ such that, under the equilibrium continuation within the round,
the resulting terminal outcome satisfies $\Phi_{G_t,T_t}(\alloc^t,\bids^t)=\textnormal{True}$ and allocates $\alloc^t_i>0$,
then at that history seller $i$ has a strict incentive to choose some action that preserves feasibility (i.e., does not lead to $\alloc^t=\mathbf{0}$).
\end{lemma}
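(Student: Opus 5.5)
The plan is to compare the expected continuation utility of seller $i$ across the actions available at the given history. The comparison splits the actions into those that lead to failure of the round-$t$ wish ($\alloc^t=\mathbf{0}$) and those that preserve feasibility.

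We work in the continuation game from the start of round $t$, conditioning on the history $h$ at which seller $i$ moves. Let $b_i>c_i$ be the bid guaranteed by the hypothesis, and let $(\alloc^t,\bids^t)$ be the terminal round-$t$ outcome under the equilibrium continuation after $b_i$, with $\Phi_{G_t,T_t}(\alloc^t,\bids^t)=\textnormal{True}$ and $x^t_i>0$.

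\textbf{Step 1: lower bound for the bid $b_i$.} Under \cref{alg:alt_wish_bafo}, a successful round $t$ makes $(\alloc^t,\bids^t)$ the incumbent. With probability $q>0$ the mechanism stops immediately and pays seller $i$ the amount $b_i x^t_i$. This yields utility $(b_i-c_i)x^t_i>0$ on that event.

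With the remaining probability $1-q$, the game continues. The key observation here is that $i$ can always secure nonnegative utility in every later phase. In any later round in which $i$ is active, bidding $c_i$ gives zero utility whatever happens. In rounds in which $i$ is inactive, $x_i=0$ (since $\Phi_{G,T}$ forces $x_i=0$ for $i\notin G$) and $i$ receives no payment. Finally, if a later round fails, the incumbent is implemented. That incumbent is either the current one (utility $(b_i-c_i)x_i^t\ge 0$) or one from a later round where $i$ bid at least $c_i$ whenever allocated, by the SPE property \eqref{eq:nonneg} applied within that round's subgame. Hence, by SPE optimality of the continuation, $i$'s continuation utility after $q$-continuation is nonnegative.

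Together these give that the expected utility from $b_i$ is at least $q(b_i-c_i)x^t_i>0$.

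\textbf{Step 2: failure yields zero.} If seller $i$ chooses an action leading to $\alloc^t=\mathbf{0}$, the mechanism stops and implements the previous incumbent $(\bar\alloc,\bar b)$ from round $t-1$. This incumbent was built from group $G_{t-1}$, which is the other part of the partition, so $\bar x_i=0$. Seller $i$'s utility is therefore exactly $0$. (If $t=1$, the incumbent is $\mathbf{0}$ and the utility is again $0$.)

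\textbf{Step 3: conclusion.} Every failure-inducing action yields utility $0$, while $b_i$ yields strictly positive expected utility. Hence the optimal action at this history must lead to a feasible terminal outcome, which is the strict incentive claimed.

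The main obstacle is Step 1's claim that the post-continuation payoff is nonnegative. The difficulty is that later rounds can replace the incumbent, and $i$ might be allocated in a later incumbent at a bid below cost. I would rule this out using subgame perfection: in any later round where $i$ is active, bidding below cost while being allocated is dominated by bidding $c_i$. So in SPE, $i$'s realized payment always covers its cost, exactly as in \eqref{eq:nonneg} from the proof of \cref{thm:implement}.
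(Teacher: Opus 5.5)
Your proposal is correct and follows the paper's proof: the bid $b_i$ yields expected utility at least $q(b_i-c_i)x^t_i>0$ through the probability-$q$ stop. Any action causing $\alloc^t=\mathbf{0}$ instead implements the round-$(t-1)$ incumbent, which is supported on the other group, and so gives exactly $0$. Your extra check that the $(1-q)$ continuation is nonnegative fills a step the paper leaves implicit, but it should rest on the security argument you already sketch: bidding $c_i$ in every later active round guarantees $i$ nonnegative utility whatever the others do, so SPE optimality gives a nonnegative continuation payoff. Do not rely on a round-by-round appeal to \eqref{eq:nonneg}, which was derived for a standalone BAFO game and need not hold inside the ladder, where a below-cost bid can be worth making to keep later rounds alive.
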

\begin{proof}
If seller $i$ chooses such a bid and feasibility is preserved, then with probability $q>0$ the mechanism stops after round $t$
and seller $i$ receives expected utility at least $q\cdot \alloc^t_i(b_i-c_i)>0$.
If instead seller $i\!$ plays an action that causes $\alloc^t=\mathbf{0}$, the algorithm stops immediately and implements the previous feasible outcome,
in which seller $i$ is, by design, not selected, and hence receives utility $0$.
Therefore feasibility-preserving behavior dominates other actions.
\end{proof}

\begin{proof}[Proof of \cref{lem:round_success_fixed_partition}]
Since $T_t\le \wel(G_t)$, there exists $ \alloc^*$ supported on $G_t$ such that
$\sum_{i\in G_t} c_i  \alloc^*_i \le \sum_{i\in G_t} r_i  \alloc^*_i$ and $v( \alloc^*)\ge T_t$,
i.e., $\Phi_{G_t,T_t}( \alloc^*,\mathbf{c})=\textnormal{True}$.
Moreover, by Lemma~\ref{lem:phi_threshold_monotone}, $\Phi_{G_t,T_t}$ is price monotonic.

Consider the sequential BAFO game induced in round $t$ with selection rule $f_{G_t,T_t}$.
We show by backward induction over the bidding order in $G_t$ that along any SPE path, feasibility of $\Phi_{G_t,T_t}$
cannot be destroyed once it is possible at the conditional prices.

At any history where seller $i$ moves, there are two cases.
If seller $i$ can obtain strictly positive utility in a feasible terminal outcome, then by Lemma~\ref{lem:round_incentive}
seller $i$ strictly prefers actions that keep feasibility over any action that makes $\alloc^t=\mathbf{0}$.
If seller $i$ cannot obtain strictly positive utility from feasibility, then either (a) feasibility does not require $i$ (so there exists a feasible allocation with $\alloc_i=0$),
in which case by price monotonicity seller $i$'s bid cannot destroy feasibility, or (b) $i$ can be included only at $b_i=c_i$,
in which case bidding $c_i$ preserves feasibility and never yields negative utility.

Thus, at every step, equilibrium play preserves feasibility whenever it is achievable.
Since feasibility is achievable at true costs, the terminal outcome of round $t$ satisfies
$\Phi_{G_t,T_t}(\alloc^t,\bids^t)=\textnormal{True}$ by $\Phi$-consistency of $f_{G_t,T_t}$.
In particular, $v(\alloc^t)\ge T_t$, and hence $\alloc^t\neq \mathbf{0}$.
\end{proof}